\theoremstyle{plain}
\newtheorem{theorem}{Theorem}
\newtheorem{definition}{Definition}
\newtheorem{lemma}[theorem]{Lemma}
\newtheorem{corollary}[theorem]{Corollary}
\newtheorem{claim}{Claim}
\newcommand{\drawing}{disk-link drawing\xspace}
\newcommand{\drawings}{\drawing{s}\xspace}
\newcommand{\modulo}[1]{\ \mathrm{mod}\ #1}
\title{Grid Drawings of Graphs with \\Constant Edge-Vertex Resolution}
\author{Michael A. Bekos$^1$, Martin~Gronemann$^2$, Fabrizio~Montecchiani$^3$,\\D\"om\"ot\"or~P\'alv\"olgyi$^4$, Antonios~Symvonis$^5$, Leonidas~Theocharous$^5$
\\
\medskip
\\
\small$^1$Wilhelm-Schickhard-Institut f\"ur Informatik, Universit\"at T\"ubingen, Germany\\
\small\texttt{bekos@informatik.uni-tuebingen.de}
\\
\small$^2$Theoretical Computer Science, Osnabr\"uck University, Osnabr\"uck, Germany\\
\small\texttt{martin.gronemann@uni-osnabrueck.de}
\\
\small$^3$Department of Engineering, Universit\'a degli Studi di Perugia, Italy\\
\small\texttt{fabrizio.montecchiani@unipg.it}
\\
\small$^4$MTA-ELTE Lend\"ulet Combinatorial Geometry Research Group, Institute of Mathematics, \\
\small E\"otv\"os Lor\'and University (ELTE), Budapest, Hungary\\
\small\texttt{dom@cs.elte.hu}
\\
\small$^5$School of Applied Mathematical \& Physical Sciences, NTUA, Greece\\
\small\texttt{symvonis@math.ntua.gr}, \texttt{leonid\_97@hotmail.com}
}
\date{} 
\begin{document}

\maketitle

\begin{abstract}
We study the algorithmic problem of computing drawings of graphs in which $(i)$~each vertex is a disk with fixed radius $\rho$, $(ii)$~each edge is a straight-line segment connecting the centers of the two disks representing its end-vertices, $(iii)$~no two disks intersect, and $(iv)$~the distance between an edge segment and the center of a non-incident disk, called \emph{edge-vertex resolution}, is at least $\rho$. We call such drawings \emph{\drawings}. 

In this paper we focus on the case of constant edge-vertex resolution, namely $\rho=\frac{1}{2}$ (i.e., disks of unit diameter). We prove that star graphs, which trivially admit straight-line drawings in linear area, require quadratic area in any such \drawing. On the positive side, we present constructive techniques that yield improved upper bounds for the area requirements of \drawings for several (planar and nonplanar) graph classes, including bounded bandwidth, complete, and planar graphs. In particular, the presented bounds for complete and planar graphs are asymptotically tight.
\end{abstract}

\section{Introduction}
\label{sec:introduction}

A \emph{drawing} $\Gamma$ of a graph $G=(V,E)$ is a mapping of each vertex $v \in V$ to a  distinct point $p(v)$ on the plane and of each edge $(u,v) \in E$ to a Jordan arc with endpoints at  $p(u)$ and $p(v)$. When edges are drawn as straight-line segments, the corresponding drawings are referred to as \emph{straight-line  drawings}. 
\emph{Drawing algorithms} are used to generate the mapping of vertices and edges to points and Jordan arcs on the plane, respectively. The produced drawings follow   conventions, or \emph{drawing styles}, which dictate the characteristic features of the drawing, e.g., whether edges have to be drawn as a single (straight-line) segment or are allowed to have ``bends'',  whether vertex placement has to follow a pattern (e.g., drawn on a circle, or on several parallel lines as a hierarchy), etc.
A drawing  algorithm usually aims to optimize some characteristic attributes of the drawing, having as ultimate goal to produce aesthetically pleasing and useful representations, i.e., drawings that reveal properties of the underlying graphs and/or  facilitate their exploratory analysis. Drawing characteristics that we typically attempt to optimize include the number of edge crossings, the area of the drawing (assuming vertices at integer coordinates), the angular resolution and  the total number of bends (if bends are allowed). We refer the interested reader to references~\cite{DBLP:books/ph/BattistaETT99,DBLP:conf/dagstuhl/1999dg,DBLP:reference/crc/2013gd}.
 
Common to almost every drawing style are two restrictions that aim to eliminate any ambiguity on the drawn graph, and thus, to improve the readability of its drawing. They state that  \emph{edges cannot intersect (or pass over) vertices of the graph} and that \emph{edges cannot overlap each other}. Figure~\ref{fi:overlap_restrictions} demonstrates that when a drawing does not respect these restrictions we cannot interpret it in an unambiguous way. 

\begin{figure}[h]
	\centering
	\begin{subfigure}{0.3\textwidth}
	\includegraphics[width=\textwidth,page=1]{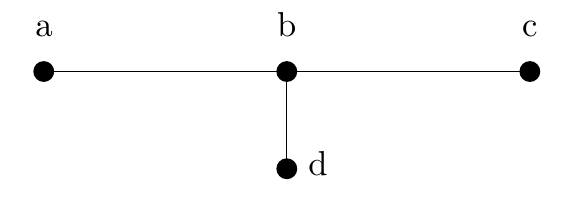}
	\subcaption{}
	\end{subfigure}
	\hfil
	\begin{subfigure}{0.3\textwidth}
	\includegraphics[width=\textwidth,page=2]{figs/intro}
	\subcaption{}
	\end{subfigure}
	\caption{%
	(a) A potential edge-vertex intersection. Does the graph  consist of two edges (i.e., $(a,c)$  and $(b,d)$) or three (i.e., $(a,b)$, $(b,c)$, $(b,d)$)?
	(b) A potential edge-edge overlap. Does the graph  consist of two edges (i.e., $(a,d)$  and $(b,c)$) or three (i.e., $(a,b)$, $(b,c)$, $(c,d)$)?}
	\label{fi:overlap_restrictions}
\end{figure}

When formalizing these restrictions for straight-line drawings, we require that  the line segment $\overline{p(u)p(v)}$ representing any edge $(u,v) \in E$ does not contain any point $p(w)$, where $w \in V$. Thus,  when trying to enforce these restrictions, edges are treated as line segments of zero width, and vertices as points. 
However, in reality, to easily identify the vertices we draw them as ``thick'' objects, and typically in the shape of a ``unit'' size disk or square.  In this scenario,  we have to make sure that no edge  intersects a non-incident vertex object. 

Reality dictates another restriction.   Graph drawings are usually displayed on a drawing canvas, where the centers of the vertex objects are  being placed at grid positions, i.e., they have integer coordinates.  So, when combined with the  requirement that vertices are of unit size, we are left with the following generic drawing problem: \emph{Given a graph $G$, produce a grid drawing $\Gamma$ of $G$ where the vertices are represented by unit-sized disks, the edges as (zero-width) line-segments and no edge intersects  any vertex~disk.}  Depending on the drawing style, the produced drawing may have to satisfy additional restrictions (e.g., no edge crossings). 

By assuming that our vertex objects are disks with unit diameter (our research can be extended to different objects with different size or shape), we call the grid drawings that have no overlaps between edges and vertices \emph{\drawings}; see Section~\ref{sec:basics} for a formal definition.
Graph editors typically create grid drawings with unit-sized disk vertices but they do not necessarily respect the ``no intersection between edge and vertex objects'' restriction; see Figure~\ref{fi:edge_disk_crossing-a}. As shown in Figure~\ref{fi:edge_disk_crossing-b},  by scaling the coordinates of the vertices up by a sufficiently large factor the problem is resolved, but at the cost of increasing the area of the drawing. We address precisely this problem: We design algorithms that compute \drawings in small area (smaller than the ones obtained by simple scaling up). 
Our research has several interesting connections to other problems studied in Graph Drawing and Computational Geometry, which we discuss below.

\begin{figure}[t]
	\centering
	\begin{subfigure}{0.2\textwidth}
	\includegraphics[width=\textwidth]{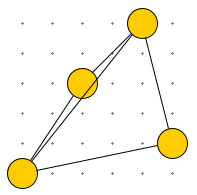}
	\subcaption{}
	\label{fi:edge_disk_crossing-a}
	\end{subfigure}
	\hfil
	\begin{subfigure}{0.2\textwidth}
	\includegraphics[width=\textwidth]{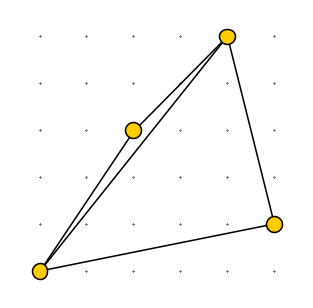}
	\subcaption{}
	\label{fi:edge_disk_crossing-b}
	\end{subfigure}
	\caption{%
	(a) Grid drawing of a graph created with the popular yEd graph editor (\url{https://www.yworks.com/yed-live/}): it contains an edge-vertex crossing. 
	(b) After scaling up (Tools $\rightarrow$ Geometric Transformations $\rightarrow$ Scale), the edge-vertex crossing is resolved.}
\label{fi:edge_disk_crossing}
\end{figure}

\paragraph{Related work} 
The problem of reducing the visual clutter caused by overlaps of vertices and edges has been recognized from the early years of Graph Drawing.  Davidson and Harel~\cite{DBLP:journals/tog/DavidsonH96}, back in 1996, presented a heuristic force-directed method to draw graphs nicely, which is based on an energy function having a term to penalize edges that are too close to vertices in the drawing.  Gansner and North~\cite{DBLP:conf/gd/GansnerN98} and Dobkin et al.~\cite{DBLP:conf/compgeom/DobkinHGN99} used two post-processing heuristics to improve drawings by reducing the visual clutter while conserving area. 

A \drawing can be equivalently considered as a traditional grid drawing in which the minimum distance between a vertex and an edge, called \emph{edge-vertex resolution}, is at least~$\frac{1}{2}$.
The first paper providing theoretical foundations for the problem of computing grid drawings with good edge-vertex resolution is by Chroback, Goodrich and Tamassia in 1996. In a preliminary work~\cite{DBLP:conf/compgeom/ChrobakGT96}, they present drawing algorithms for computing both 2D and 3D convex grid drawings with constant edge-vertex resolution. In relation to our work, they state that every 3-connected planar graph with $n$ vertices admits a convex grid drawing with constant edge-vertex resolution in $(3n-7 )\times (3n-7)/2$ area. However, several proofs, including the algorithm supporting this last result and its analysis, are missing from the paper and postponed to a full version that has not appeared yet. We further note that the edge-vertex resolution has also been considered in convex grid drawings with prescribed outer face~\cite{DBLP:journals/jgaa/ChambersEGL12} and in grid drawings with bends along the edges and optimal angular resolution~\cite{DBLP:journals/jal/GoodrichW00}.

Relevant to our work are also the \emph{rectangle-of-influence} (RI for short) drawings, which are planar straight-line drawings such that no vertex lies inside the axis-parallel rectangle defined by the two ends of every edge~\cite{DBLP:conf/gd/AlamdariB12,DBLP:journals/dm/BarriereH12,DBLP:conf/gd/BiedlBM99,DBLP:conf/cccg/BiedlLMV16,DBLP:journals/dcg/MiuraMN09,DBLP:journals/comgeo/SadasivamZ11}. (In particular, we refer to \emph{weak} RI drawings for which the absence of vertices in the axis-parallel rectangle defined by any two vertices does not imply the existence of an edge in the graph.) If all such rectangles are considered as open (resp.\ closed) sets, we call the corresponding drawings \emph{open} RI (resp.\ a \emph{closed} RI). While open RI drawings ensure some kind of distance between vertices and edges, they cannot be interpreted as \drawings because an edge may still intersect a disk whose center is on the boundary of its open rectangle. On the other hand, any closed RI drawing whose vertices are at integer coordinates can be seen as a \drawing. This implies that \drawings in quadratic area exist for all plane graphs with no filled $3$-cycle~\cite{DBLP:journals/dm/BarriereH12,DBLP:conf/gd/BiedlBM99,DBLP:journals/comgeo/SadasivamZ11}, hence including, for instance, trees and outerplane graphs. On the other hand, any plane graph with a filled $3$-cycle (e.g., $K_4$) does not admit a closed RI drawing~\cite{DBLP:conf/gd/BiedlBM99}.

Another related research direction considers drawings where vertices are objects with integer coordinates and the edges are fat segments. Barequet et al.~\cite{DBLP:journals/jgaa/BarequetGR04}, in an attempt to visualize weighted graphs, study drawings where the width of each edge is proportional to its weight and the width of each vertex is proportional to the sum of the weights of its incident edges. 
For edges with ``zero'' width, their drawings appear to be of similar style to the ones we consider in this paper, however one significant difference remains; in the drawings produced in~\cite{DBLP:journals/jgaa/BarequetGR04} \emph{the edges do not connect the centers of the incident vertex-disks} but rather simply enter these vertex-objects through varying angles.
Duncan et al.~\cite{DBLP:journals/ijfcs/DuncanEKW06} also use fat edges but, in contrast to~\cite{DBLP:journals/jgaa/BarequetGR04}, they do not compute a drawing from scratch but rather try to extend an existing one without modifying the area of the layout. 

Van Kreveld~\cite{DBLP:journals/comgeo/Kreveld11} introduced and studied \emph{bold drawings}, in which vertices are drawn as disks of radius $r$ and edges as rectangles of width $w$, where $r>w/2$. He concentrated on \emph{good bold drawings}, defined (informally) as bold drawings having all of its vertices and edges at least partially visible, in the sense that the area covered by overlapping edges is not sufficient to completely hide any vertex disk or edge-rectangle. In this regard, \drawing{s} can be seen as a special case of bold drawings in which $r=\frac{1}{2}-\varepsilon$ and $w=2\varepsilon$, for some sufficiently small $\varepsilon>0$. However, the research on bold drawings has mainly focused on finding feasible and strictly positive values of $r$ and $w$, rather than on area bounds for fixed values of $r$ and $w$ (which is the main question addressed in this paper). In particular, in~\cite{DBLP:journals/comgeo/Kreveld11}, it is shown that if a typical graph drawing (i.e., with point vertices and zero-width edges) is in non-degenerate position (i.e., no edge intersects a non-incident vertex and no three edges pass through a common point), then there exist positive values $r$ and $w$ that will turn it into a bold drawing. Van Kreveld also presented algorithms for (i) deciding whether for given $r$ and $w$ values a drawing is bold, and (ii) for maximizing $r$ and/or $w$ for a given drawing so that it is turned to a bold one. Later, Pach~\cite{DBLP:journals/jgaa/Pach15} answered one question posed by Van Kreveld in~\cite{DBLP:journals/comgeo/Kreveld11}. Namely, he showed that every graph admits a bold drawing in which the region occupied by the union of disks and rectangles representing the vertices and edges does not contain any disk of radius $r$ other than the ones representing the graph's vertices (i.e., no vertices can be hidden).  

When the input is a complete graph, our problem generalizes the classical \emph{no-three-in-line problem}~\cite{dudeney_1917,erdos}, which asks for the maximum number of points that can be placed on an $n \times n$ grid such that no three points are collinear.  In this regard, a result by Wood \cite{DBLP:journals/comgeo/Wood05} states that every $n$-vertex $k$-colorable graph has an $O(k) \times O(n)$ grid drawing such that no three points are collinear.
In our model, the collinearity requirement is strengthened by the no edge-disk intersection rule.

We finally remark that a straight-line drawing on an integer grid using only the horizontal, vertical and $\pm 1$ slopes for its edges can be transformed into a \drawing. Consequently, triconnected cubic planar graphs (except $K_4$) admit \drawings on grids of quadratic size~\cite{DBLP:journals/tcs/GiacomoLM18,DBLP:conf/wg/Kant92}. 

\paragraph{Contribution and paper organization} Based on the literature overview given in the previous paragraph, the problem of computing \drawings in compact area has not been tackled before (even though few results carry over from similar problems). Our contribution is as follows:


\begin{itemize}
\item We first give preliminaries and some basic results (Section~\ref{sec:basics}). In particular, we give an upper bound on the stretching factor that turns any grid drawing into a \drawing. This result immediately implies some area upper bounds for \drawings of certain~graph~classes. 

\item We then study improved area bounds for nonplanar graphs (Section~\ref{sec:nonplanar}). We show that bounded bandwidth graphs admit \drawings in linear area. The latter result is obtained by exploiting a construction of Erd\H{o}s~\cite{erdos} for the no-three-in-line problem. Moreover, as the main result of this section, we prove that every $n$-vertex complete graph has a convex \drawing in $O(n^4)$ area, which is asymptotically optimal. The upper bound is obtained by using the corners of a regular $n$-gon as an initial placement of the vertices, and by suitably perturbing the position of each vertex to enforce~integer~coordinates.

\item Afterwards, we turn our attention to crossing-free \drawings (Section~\ref{sec:planar}). Surprisingly, we can prove that $n$-vertex star graphs, which trivially admit straight-line drawings in $O(n)$ area, require $\Omega(n^2)$ area in any \drawing. This also implies that the previously mentioned quadratic area upper bound for trees and outerplanar graphs by Biedl et al.~\cite{DBLP:conf/gd/BiedlBM99} is asymptotically tight for both \drawings and closed RI drawings of such graphs (previously, a quadratic lower bound was known for the area of closed RI drawings of irreducible triangulations~\cite{DBLP:journals/comgeo/SadasivamZ11}). On the positive side, we present constructive techniques for all planar graphs that produce \drawings which asymptotically meet our area lower bound. Namely, we provide a detailed description of a linear-time algorithm that supports the claim in~\cite{DBLP:conf/compgeom/ChrobakGT96}, and computes planar (not necessarily convex) \drawings  in $(3n-7) \times \lceil (3n-7)/2 \rceil$ area for $n$-vertex planar graphs. This result is obtained by extending a central technique by de~Fraysseix, Pach and Pollack~\cite{DBLP:journals/combinatorica/FraysseixPP90}.

\end{itemize}

\section{Preliminaries and Basic Results}
\label{sec:basics}

\subsection{Basic Graph Drawing definitions.} 
We only consider \emph{simple graphs}, that is, graphs with neither self-loops nor parallel edges.  A \emph{drawing} of a graph $G$ is a mapping of the vertices of $G$ to distinct points of the plane, and of the edges of $G$ to Jordan arcs connecting their corresponding endpoints. We only consider \emph{simple drawings}, where any two edges intersect in at most one point, which is either a common endpoint or an interior point where the two edges properly cross. A drawing is \emph{planar} if no two edges intersect, except possibly at a common endpoint. A graph is \emph{planar} if it admits a planar drawing. A planar drawing partitions the plane into topologically connected regions, called \emph{faces}. The infinite region is called the \emph{outer face}; any other face is an \emph{inner face}. A \emph{planar embedding} of a planar graph is an equivalence class of topologically-equivalent (i.e., isotopic) planar drawings. A planar graph with a given planar embedding is a \emph{plane graph}. For a detailed description of major results in planar graph drawing refer to~\cite{DBLP:reference/crc/Vismara13}.

A drawing is \emph{straight-line} if the Jordan arc representing any edge is a straight-line segment. In what follows we only consider straight-line drawings. The \emph{slope} $s$ of a line $\ell$ is the angle that a horizontal line needs to be rotated counter-clockwise in order to make it overlap with $\ell$. The \emph{slope} of a segment is the slope of the supporting line containing it. A \emph{grid drawing} is a straight-line drawing whose vertices are at integer coordinates.  

\subsection{Disk-link drawings.}  A \drawing is formally defined as follows.

\begin{definition}
A \emph{\drawing} $\Gamma$ of a graph $G$ maps each vertex of $G$ to a distinct open disk with radius $\rho>0$ and each edge of $G$ to a (zero-width) straight-line segment connecting the centers of the two disks corresponding to its end-vertices, so that $(i)$~the center of each~disk has integer coordinates, $(ii)$~no two disks intersect, and $(iii)$~no disk is intersected by a non-incident~edge. 
\end{definition}  

\noindent 
We assume, for simplicity, that $\rho=\frac{1}{2}$, i.e., the disks have unit diameter. In accordance to traditional grid drawings, the \emph{edge-vertex resolution} of a \drawing $\Gamma$ is the minimum distance between the center of any disk and any non-incident edge, which is at least $\rho=\frac{1}{2}$ by our assumption.
We say that a graph admits a \drawing (resp.\ a grid drawing) on a grid of size $W \times H$ (or, equivalently, in area $W \times H$), if the minimum axis-aligned box containing it has side lengths $W-1$ and $H-1$. 
Note that our assumption $\rho=\frac{1}{2}$ is not restrictive, since \emph{our results carry over for any constant radius} up to some multiplicative constant factor for the area. We now introduce a basic property which we use to transform a grid drawing into a \drawing; the \emph{$x$-} and \emph{$y$-span} of an edge $(u,v)$ whose endpoints are $(x_u,y_u)$ and $(x_v,y_v)$ in a grid drawing are the quantities $\sigma_x(u,v)=|x_u-x_v|$ and $\sigma_y(u,v)=|y_u-y_v|$, respectively.

\begin{lemma}\label{le:stretch}
Let $\Gamma$ be a grid drawing of a graph $G$ and let $(u,v)$ be an edge of $\Gamma$ such that $\sigma_x(u,v)=X$ and $\sigma_y(u,v)=Y$. Let $\Gamma'$ be the drawing obtained by mapping each vertex $v$ with coordinates $(x_v,y_v)$ in $\Gamma$ to the point $(x_v\cdot \phi_X, y_v \cdot \phi_Y)$, where $\phi_X$ and $\phi_Y$ are integers such that $\phi_X \ge 2Y$ and $\phi_Y \ge 2X$. Then, $\Gamma'$ is a grid drawing of $G$ in which the minimum distance between any vertex and the edge segment representing $(u,v)$ is at least~$\rho=\frac{1}{2}$.
\end{lemma}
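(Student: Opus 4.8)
The plan is to fix an arbitrary vertex $w\notin\{u,v\}$ and show that in $\Gamma'$ its image lies at distance at least $\tfrac12$ from the segment representing $(u,v)$; this is exactly the edge-vertex resolution condition for the single edge $(u,v)$. First I would record two structural facts about the map $\mu\colon(x,y)\mapsto(x\phi_X,\,y\phi_Y)$ that produces $\Gamma'$: since $\phi_X,\phi_Y$ are positive integers, $\mu$ sends integer points to integer points, so $\Gamma'$ is again a grid drawing; and since $\mu$ is a diagonal affine map, it preserves collinearity and betweenness. I then split into the generic case where $w$ is off the line through $u,v$ and the degenerate case where it lies on that line.

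For the generic case I would use the point-to-line distance formula, writing the distance from a point $P$ to the line through $A,B$ as $\lvert\det(B-A,\,P-A)\rvert/\lVert B-A\rVert$, whose numerator is twice the area of triangle $ABP$. In $\Gamma$ this twice-area equals $\lvert C\rvert$ with $C=(x_v-x_u)(y_w-y_u)-(x_w-x_u)(y_v-y_u)$ an integer; since $w$ is off the line, $C\neq0$ and thus $\lvert C\rvert\ge1$ --- this integrality is the crux of the argument. Under $\mu$ the numerator scales by the determinant $\phi_X\phi_Y$, while the denominator becomes $\sqrt{X^2\phi_X^2+Y^2\phi_Y^2}$, so the distance in $\Gamma'$ is at least $\phi_X\phi_Y/\sqrt{X^2\phi_X^2+Y^2\phi_Y^2}$. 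Finally I would plug in the hypotheses $X\le\phi_Y/2$ and $Y\le\phi_X/2$ to bound $X^2\phi_X^2+Y^2\phi_Y^2\le\tfrac12\phi_X^2\phi_Y^2$, which makes the distance at least $\sqrt2\ge\tfrac12$. Since the distance to the segment is never smaller than the distance to its supporting line, this settles the generic case, with room to spare.

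The degenerate case, where $w$ is collinear with $u$ and $v$, is where one has to be a little careful, since the line distance is $0$ and I must argue about the segment directly. Here I would invoke that $\Gamma$ is a valid drawing, so no non-incident vertex lies on the closed segment $\overline{uv}$; as $\mu$ preserves betweenness, $w$ remains off the segment in $\Gamma'$, hence its nearest segment point is an endpoint, say $u$. Because $w$ and $u$ are distinct grid points, their images differ by at least $\phi_X$ in the $x$-coordinate or by at least $\phi_Y$ in the $y$-coordinate, so the distance is at least $\min(\phi_X,\phi_Y)\ge1\ge\tfrac12$, completing the proof. The only genuinely delicate points are the integrality of $C$ and the bookkeeping of how the anisotropic scaling affects numerator versus denominator --- the factors $2Y$ and $2X$ in the hypothesis are precisely calibrated so that the denominator stays controlled.
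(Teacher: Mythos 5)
Your proof is correct, but it follows a genuinely different route from the paper's. The paper normalizes first: it reduces to the extremal factors $\phi_X=2Y$, $\phi_Y=2X$ (asserting that larger factors only increase distances), translates so that the edge becomes the slope-one segment from $(0,0)$ to $(2XY,2XY)$, and then argues by parity --- any grid point at distance less than $1$ from that segment must lie on the slope-one line through $(1,0)$ or through $(0,1)$, and the preimage $(\tfrac{q+1}{2Y},\tfrac{q}{2X})$ of such a point cannot be a grid point, since exactly one of $q$, $q+1$ is odd while $2X$ and $2Y$ are even. Your argument instead works uniformly for all admissible $\phi_X,\phi_Y$: integrality of the determinant (twice a lattice-triangle area) gives $\lvert C\rvert\ge 1$ in $\Gamma$, the anisotropic scaling multiplies the numerator by exactly $\phi_X\phi_Y$, and the hypotheses bound the denominator by $\phi_X\phi_Y/\sqrt{2}$, yielding distance at least $\sqrt{2}$. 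Your route buys three things: it avoids the paper's unproved (though true) monotonicity claim behind the reduction to the extremal factors; it gives a stronger constant ($\sqrt{2}$ rather than $1$); and it explicitly handles the collinear case, which the paper's case analysis silently skips --- grid points lying on the supporting line itself are on neither of the paper's two lines, and they are harmless only because a valid drawing has no vertex in the relative interior of an edge, an assumption both proofs need but only yours states. What the paper's route buys is elementarity: after normalization the contradiction is pure parity, with no distance-formula bookkeeping. One small flag: your degenerate case uses $\min(\phi_X,\phi_Y)\ge 1$, i.e.\ positivity of the factors, which you assume at the outset; strictly speaking the lemma's hypotheses allow $\phi_X=0$ when $Y=0$, but positivity is needed even for $\Gamma'$ to be a drawing at all, so this is an implicit hypothesis of the lemma rather than a gap in your argument.
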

\begin{proof}
Drawing $\Gamma'$ is a grid drawing of $G$, as it is obtained through an affine transformation of $\Gamma$ and both $\phi_X$ and $\phi_Y$ are integers. 
We prove that the minimum distance between any vertex and the edge segment representing $(u,v)$ is at least~$1$ (and thus at least $\rho=\frac{1}{2}$). To this aim, it suffices to consider the case in which $\phi_X = 2Y$ and $\phi_Y = 2X$, as for larger values the distance between $(u,v)$ and any vertex in $\Gamma'$ can only increase further. Up to a translation, we may assume that one endpoint of $(u,v)$ in $\Gamma$ is $(0,0)$, which implies that its other endpoint is $(X,Y)$. Since $\phi_X = 2Y$ and $\phi_Y = 2X$, the endpoints of $(u,v)$ in $\Gamma'$ are $(0,0)$ and $(2XY,2XY)$. Assume to the contrary that there is a vertex $w$ in $\Gamma'$, which is at a distance strictly less than~$1$ from $(u,v)$. It follows that $w$ must lie at a grid point either on line $l_w$ with slope $+1$ through the point $(1,0)$ or on line $l_w'$ with slope $+1$ through the point $(0,1)$. By symmetry, we may assume that the former situation applies. For some integer number $q$, let $(q+1,q)$ be the grid point representing $w$ along $l_w$ in $\Gamma'$. By the stretching factors $\phi_X$ and $\phi_Y$, the position of $w$ in $\Gamma$ is $(\frac{q+1}{2Y},\frac{q}{2X})$, which must be a grid point since $\Gamma$ is a grid drawing. Since both $2X$ and $2Y$ are even and either $q+1$ or $q$ is odd,  either $\frac{q+1}{2Y}$ or $\frac{q}{2X}$ is not integer, which contradicts the fact that $\Gamma$ is a grid drawing.
\end{proof}

\noindent The next theorem easily follows from the previous lemma.

\begin{theorem}[Stretching Theorem]\label{th:stretch}
Every graph that admits a $W \times H$ grid drawing also admits a \drawing on a grid of size $2HW \times 2HW$.
\end{theorem}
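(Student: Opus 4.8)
The plan is to derive the theorem from a single application of Lemma~\ref{le:stretch} that treats all edges at once. The lemma fixes one edge $(u,v)$ with spans $X=\sigma_x(u,v)$ and $Y=\sigma_y(u,v)$ and guarantees, after scaling the $x$- and $y$-axes by integers $\phi_X\ge 2Y$ and $\phi_Y\ge 2X$, that this particular edge is at distance at least $\rho=\tfrac12$ from every vertex in the transformed drawing $\Gamma'$. Since its conclusion concerns the \emph{same} drawing $\Gamma'$ for a given choice of $(\phi_X,\phi_Y)$, the idea is to pick one pair $(\phi_X,\phi_Y)$ that simultaneously satisfies the lemma's hypothesis for every edge, and then invoke the lemma edge-by-edge on this common $\Gamma'$.

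First I would bound the spans globally. Because $\Gamma$ fits in a $W\times H$ grid, every edge $e$ satisfies $\sigma_x(e)\le W-1$ and $\sigma_y(e)\le H-1$. Hence the choice $\phi_X=2H$ and $\phi_Y=2W$ (both even integers) gives $\phi_X=2H>2(H-1)\ge 2\,\sigma_y(e)$ and $\phi_Y=2W>2(W-1)\ge 2\,\sigma_x(e)$ for every edge $e$. With this single choice, the hypothesis $\phi_X\ge 2Y$, $\phi_Y\ge 2X$ of Lemma~\ref{le:stretch} holds for each edge, so in $\Gamma'$ the distance between any vertex and any non-incident edge is at least $\tfrac12$. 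This establishes condition $(iii)$ of a \drawing.

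Next I would verify the remaining requirements. Condition $(i)$ (integer coordinates) is immediate, as $\Gamma'$ is obtained from the integer grid drawing $\Gamma$ by scaling with the integers $\phi_X,\phi_Y$. For condition $(ii)$ (disjoint disks), scaling by nonzero factors keeps the vertex images distinct integer points, and two distinct integer points are at Euclidean distance at least $1$, so the open disks of radius $\tfrac12$ do not intersect. Finally, for the area bound, the transformed drawing spans $(W-1)\phi_X=2HW-2H$ horizontally and $(H-1)\phi_Y=2HW-2W$ vertically, both strictly below $2HW$, so $\Gamma'$ fits on a grid of size $2HW\times 2HW$.

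The point that needs the most care is the passage from the per-edge guarantee of Lemma~\ref{le:stretch} to a global one: one must fix a \emph{single} pair $(\phi_X,\phi_Y)$ dominating the spans of all edges at once, rather than rescaling separately for each edge (which would not produce a consistent drawing). Choosing the factors from the global maxima of the spans—conveniently bounded by the grid dimensions $H$ and $W$—resolves this, and the even values $2H$ and $2W$ are compatible with the parity reasoning inside the lemma while yielding exactly the claimed $2HW\times 2HW$ bound.
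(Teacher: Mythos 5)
Your proof is correct and follows essentially the same route as the paper: bound the $x$- and $y$-spans of all edges by the grid dimensions and invoke Lemma~\ref{le:stretch} with a single scaling pair. The paper's own proof is just a terser version of this; your extra care in fixing one common pair $(\phi_X,\phi_Y)=(2H,2W)$, checking the disk-link conditions, and computing the resulting side lengths simply makes explicit what the paper leaves implicit.
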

\begin{proof}
Let $\sigma_X$ and $\sigma_Y$ be the maximum $x$- and $y$-span over all edges in the~$W \times H$ grid drawing. Since $\sigma_X \leq W$ and $\sigma_Y \leq H$, the result follows by Lemma~\ref{le:stretch}.
\end{proof}

Corollary~\ref{co:kcolorable} is obtained by combining Theorem~\ref{th:stretch} and a result by Wood~\cite{DBLP:journals/comgeo/Wood05}, who proved that every $n$-vertex $k$-colorable graph has an $O(k) \times O(n)$ grid drawing.  Note that Corollary~\ref{co:kcolorable} applied to a planar graph yields a \drawing on a grid of quadratic size, which, however, is not necessarily planar. Corollary~\ref{co:planar} is an immediate implication of Theorem~\ref{th:stretch} and the fact that every $n$-vertex planar graph has an $O(n) \times O(n)$ grid drawing~\cite{DBLP:journals/combinatorica/FraysseixPP90,DBLP:conf/soda/Schnyder90}; a drastic improvement will be presented in Section~\ref{sec:planar}.

\begin{corollary}\label{co:kcolorable}
Every $k$-colorable graph with $n$ vertices admits a \drawing on a grid of size $O(k \, n) \times O(k \, n)$. 
\end{corollary}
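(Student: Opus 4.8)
The plan is to combine the two results the corollary cites directly. First I would invoke Wood's theorem, which guarantees that every $n$-vertex $k$-colorable graph admits a grid drawing of size $W \times H$ with $W = O(k)$ and $H = O(n)$ (or, symmetrically, $W=O(n)$ and $H=O(k)$). This gives me concrete bounds on the width and height of an ordinary grid drawing, which is exactly the input format required by the Stretching Theorem (Theorem~\ref{th:stretch}).

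Next I would apply the Stretching Theorem to this drawing. The theorem takes any $W \times H$ grid drawing and produces a \drawing on a grid of size $2HW \times 2HW$. Substituting the bounds from Wood's result, I get $H W = O(n) \cdot O(k) = O(k\,n)$, so the resulting \drawing lives on a grid of size $O(k\,n) \times O(k\,n)$, which is precisely the claimed bound. The whole argument is therefore a two-line composition, and I expect no real obstacle: the only point needing minor care is confirming that the product $HW$ is $O(kn)$ regardless of which dimension carries the factor $k$ and which carries $n$, since the stretched grid is square of side $2HW$ in either case.

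One subtlety worth flagging, though not an obstacle to the area bound, is that Theorem~\ref{th:stretch} stretches by a uniform factor determined by the \emph{maximum} $x$- and $y$-spans over all edges, and the proof of that theorem already accounts for this by bounding $\sigma_X \le W$ and $\sigma_Y \le H$. Hence I need not re-examine individual edge spans; the coarse bound suffices and the $O(k\,n)\times O(k\,n)$ area follows immediately. Finally, I would remark (as the surrounding text does) that this construction does not preserve planarity even when the input graph is planar, since Wood's drawings are not crossing-free in general; this motivates the improved planar-specific constructions deferred to Section~\ref{sec:planar}.
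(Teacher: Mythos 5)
Your proof is correct and follows exactly the paper's argument: apply Wood's $O(k)\times O(n)$ grid drawing result and then the Stretching Theorem (Theorem~\ref{th:stretch}) to obtain a \drawing on a grid of size $2HW \times 2HW = O(k\,n)\times O(k\,n)$. Your closing remark about the loss of planarity also matches the paper's own caveat, so there is nothing to add.
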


\begin{corollary}\label{co:planar}
Every planar graph with $n$ vertices admits a planar \drawing on a grid of size $O(n^2) \times O(n^2)$.
\end{corollary}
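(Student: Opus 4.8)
The plan is to read off the bound directly from the Stretching Theorem (Theorem~\ref{th:stretch}), taking care to verify the one point it does not explicitly assert, namely that the construction preserves planarity. First I would invoke the classical result of de~Fraysseix, Pach and Pollack~\cite{DBLP:journals/combinatorica/FraysseixPP90} (or, alternatively, Schnyder~\cite{DBLP:conf/soda/Schnyder90}) to fix a \emph{planar} straight-line grid drawing $\Gamma$ of $G$ on a grid of size $W \times H$ with $W, H \in O(n)$. Applying Theorem~\ref{th:stretch} to $\Gamma$ then yields a \drawing $\Gamma'$ on a grid of size $2HW \times 2HW = O(n^2) \times O(n^2)$, which already gives the claimed area.

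The only step requiring an argument beyond a bare citation of Theorem~\ref{th:stretch} is that $\Gamma'$ is \emph{planar}: the theorem as stated guarantees the edge-vertex resolution but says nothing about the absence of edge crossings. Here I would exploit the fact that the map underlying Lemma~\ref{le:stretch}, and hence Theorem~\ref{th:stretch}, is the affine scaling $(x,y) \mapsto (x\cdot\phi_X,\, y\cdot\phi_Y)$ with $\phi_X,\phi_Y > 0$. Such a map is a homeomorphism of the plane that sends straight-line segments to straight-line segments; consequently two edges of $\Gamma'$ intersect if and only if the corresponding edges of $\Gamma$ do, and an interior crossing maps to an interior crossing. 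Since $\Gamma$ is planar, no two of its edges cross, so no two edges of $\Gamma'$ cross either, and $\Gamma'$ is planar.

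The main (and essentially only) obstacle is thus bookkeeping: confirming that the positive scaling preserves the crossing structure of the drawing. This is exactly the feature that distinguishes Corollary~\ref{co:planar} from Corollary~\ref{co:kcolorable}, where the initial drawing of Wood~\cite{DBLP:journals/comgeo/Wood05} is not planar to begin with, so the same scaling yields only a (non-planar) \drawing. No additional area estimate is needed, and the bound $O(n^2) \times O(n^2)$ follows immediately once planarity has been secured.
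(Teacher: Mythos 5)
Your proof is correct and follows essentially the same route as the paper: obtain an $O(n) \times O(n)$ planar grid drawing via de~Fraysseix--Pach--Pollack or Schnyder, then apply Theorem~\ref{th:stretch}. Your explicit verification that the positive affine scaling preserves planarity is a detail the paper leaves implicit, but it is the same argument, not a different approach.
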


\section{Nonplanar Drawings}
\label{sec:nonplanar}

In this section we study \drawings for two families of nonplanar graphs, namely bounded bandwidth graphs and complete graphs.

\subsection{Bounded bandwidth graphs}  A graph $G=(V,E)$ has \emph{bandwidth} $b$ if there is a total ordering of the vertices of $G$, denoted by $\prec_b$, such that for every edge $(u,v) \in E$ with $u \prec_b v$, the cardinality of the set  $\{w \in V: u \prec_b w \preceq_b v \}$ is at most $b$ (see, e.g.,~\cite{DBLP:journals/jcss/DubeyFU11,DBLP:conf/swat/Feige00}). We show that the graphs with bounded bandwidth admit \drawings in~linear~area.

\begin{theorem}\label{thm:bandwidth}
Every graph $G$ with $n$ vertices and bandwidth $b$ admits a \drawing on a grid of size $O(b\,n) \times O(b^2)$.
\end{theorem}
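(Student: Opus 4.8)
The plan is to first lay the vertices out compactly (but with possibly degenerate collinearities) using the bandwidth ordering, and then to convert this layout into a \drawing via the \emph{asymmetric} scaling of Lemma~\ref{le:stretch}. Let $v_1\prec_b\cdots\prec_b v_n$ be a bandwidth ordering of $G$, and place $v_i$ at the integer point $(i,h(i))$ for a height function $h$ to be chosen with range $O(b)$. With this placement the $x$-span of every edge is at most $b$: if $(v_i,v_j)$ is an edge with $i<j$, then $\{w:v_i\prec_b w\preceq_b v_j\}$ has exactly $j-i$ elements, so $\sigma_x(v_i,v_j)=j-i\le b$, giving $\sigma_X\le b$ globally; and since $h$ has range $O(b)$, both $\sigma_Y=O(b)$ and the initial height are $O(b)$ as well. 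The resulting object is a grid drawing in $O(n)\times O(b)$ area.

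The crux is to choose $h$ so that no edge segment passes through a third vertex, as this is exactly the configuration that the parity argument underlying Lemma~\ref{le:stretch} cannot repair (a scaled vertex could otherwise land on the slope-$1$ image of an edge). I would use the no-three-in-line construction of Erd\H{o}s~\cite{erdos} \emph{periodically}: fix a prime $p$ with $b<p=O(b)$ (via Bertrand's postulate, treating tiny $b$ directly) and set $h(i)=\big((i\bmod p)^2\big)\bmod p\in\{0,\dots,p-1\}$. The claim is that for any $i_1<i_2<i_3$ with $i_3-i_1\le b$ the points $(i_k,h(i_k))$ are not collinear. The step I expect to be the main obstacle is that the window $[i_1,i_3]$ may \emph{straddle a multiple of $p$}, so the three residues need not form a contiguous arc of the parabola; this is resolved by reducing the collinearity determinant modulo $p$. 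Since $i_3-i_1<p$, the residues $r_k=i_k\bmod p$ are pairwise distinct, and modulo $p$ one has $\det\!\begin{pmatrix} i_1 & h(i_1) & 1\\ i_2 & h(i_2) & 1\\ i_3 & h(i_3) & 1\end{pmatrix}\equiv\det\!\begin{pmatrix} r_1 & r_1^2 & 1\\ r_2 & r_2^2 & 1\\ r_3 & r_3^2 & 1\end{pmatrix}=\pm(r_2-r_1)(r_3-r_1)(r_3-r_2)$, a nonzero element of $\mathbb{F}_p$. Hence the integer determinant is nonzero and the three points are not collinear over $\mathbb{R}$, independently of wraparound; in particular no edge segment (whose endpoints are within $b$ positions of each other) contains a third vertex.

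Finally I would apply Lemma~\ref{le:stretch} with the \emph{asymmetric} factors $\phi_X=2\sigma_Y$ and $\phi_Y=2\sigma_X$, rather than the symmetric Theorem~\ref{th:stretch}, since the latter would produce a square of side $O(bn)$ and destroy the $O(b^2)$ height bound. For every edge these global factors satisfy the per-edge hypotheses $\phi_X\ge 2\sigma_y(\cdot)$ and $\phi_Y\ge 2\sigma_x(\cdot)$, so the lemma certifies distance at least $\tfrac12$ between each edge and every vertex, using the property established above that no edge passes through a vertex. As distinct integer points are always at distance at least $1$, the open unit-diameter disks are pairwise disjoint, so the output is a valid \drawing. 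Tallying dimensions with $\phi_X=2\sigma_Y=O(b)$, $\phi_Y=2\sigma_X=2b$, and $p=O(b)$: the scaled $x$-coordinates span $\phi_X\cdot(n-1)=O(bn)$ and the scaled $y$-coordinates span $\phi_Y\cdot(p-1)=2b\cdot O(b)=O(b^2)$, giving a \drawing on a grid of size $O(bn)\times O(b^2)$, as claimed.
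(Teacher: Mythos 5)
Your proposal is correct and follows essentially the same route as the paper: place $v_i$ at $(i,\, i^2 \bmod p)$ for a prime $p$ with $b < p \le 2b$ (Bertrand's postulate), show that no three vertices whose indices span at most $b$ positions are collinear, and finish with the asymmetric scaling of Lemma~\ref{le:stretch} to get $O(b\,n) \times O(b^2)$ area. The only difference is presentational: where the paper assumes $G$ is bandwidth-maximal and invokes Erd\H{o}s's proof on windows of $p$ consecutive vertices to rule out collinear $3$-cycles, you prove non-collinearity of \emph{every} triple in a width-$b$ window directly via the Vandermonde determinant reduced modulo $p$, which handles the wraparound across multiples of $p$ explicitly and makes the maximality assumption unnecessary.
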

\begin{proof}
We assume that $G$ is maximal (i.e., no edge can be added without increasing its bandwidth). 
At a high level, for a prime number $p$, we first construct an $O(n) \times O(b)$ grid drawing $\Gamma$ of $G$ in which no three vertices forming a $3$-cycle in $G$ are collinear, and for any edge $(u,v)$ it holds that $\sigma_X(u,v) < p$ and $\sigma_Y(u,v) < p$. Applying Lemma~\ref{le:stretch} to $\Gamma$ yields the desired \drawing. 
To construct $\Gamma$, we make use of a result by Erd\H{o}s~\cite{erdos}, who showed that for every prime $p$, there do not exist three collinear points in the set consisting of the $p$ points
\begin{equation}\label{eq:b1}
p_i = ( i,i^2 \modulo{p}), \; i=0,1,\dots,p-1,
\end{equation}
\noindent which  follows from the fact that a quadratic polynomial can have at most two intersection points with a line over $\mathbb{F}_p$, i.e., over the prime field of order $p$.

Let  $v_0, v_1, \dots, v_{n-1}$ be the vertices of $G$ according to $\prec_b$. Let $p$ be the minimum prime number such that $p > b$. Our construction is the following simple extension of~\cite{erdos}; refer to Figure~\ref{fi:bandwidth} for an illustration. We map $v_i$ to
\begin{equation}\label{eq:b2}
p_i = ( i,i^2 \modulo{p}), \; i=0,1,\dots,n-1.
\end{equation}
\begin{figure}
\centering
\includegraphics{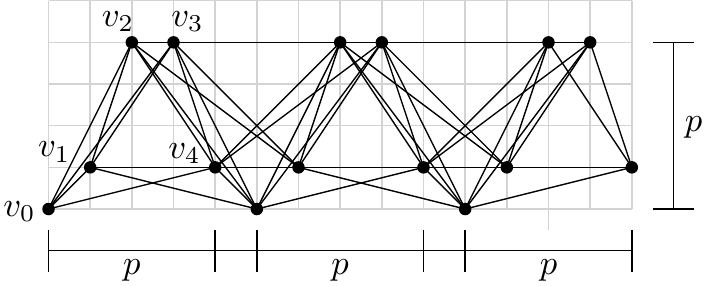}
\caption{Illustration for Theorem~\ref{thm:bandwidth}. A grid drawing of a graph with bandwidth $b=4$ computed by applying the described construction with $p=5$.  \label{fi:bandwidth}}
\end{figure}

\noindent  Note that the $y$-coordinates of the points defined in Eq.~(\ref{eq:b2}) have a period of $p$. Namely, when $i > p$,  we can assume that $i = t\, p + j$ for some $t \in \mathbb{N}$ and $j < p$. It follows that
\[i^2 \modulo{p} = (t\, p + j)^2 \modulo{p} = (t\, p)^2 + 2\,t\, p\, j + {j}^2 \modulo{p}= j^2 \modulo{p}.\]

We claim that no three vertices forming a $3$-cycle in $G$ are collinear in the constructed drawing. To see this, consider any three vertices $v_{k}$, $v_{\ell}$ and $v_{m}$ that form a $3$-cycle in $G$, and assume without loss of generality that $k < \ell < m$.  Since $G$ has bandwidth $b$, $m-k \le b<p$ holds, as otherwise $v_{k}$ and $v_{m}$ would not be adjacent. Hence, there is a set of $p$ consecutive vertices placed according to Eq.~(\ref{eq:b2}) containing $v_{k}$, $v_{\ell}$ and $v_{m}$. However, the proof by Erd\H{o}s \cite{erdos} ensures that such a set contains no three collinear points.

We constructed a grid drawing $\Gamma$ of $G$ on a grid of size $n \times p$. Additionally, for any edge $(u,v)$ it holds $\sigma_X(u,v) < p$ and $\sigma_Y(u,v) < p$. Then the result follows by Lemma~\ref{le:stretch} and by Bertrand's postulate, which shows~that~$p \le 2b$.  
\end{proof}

\subsection{Complete graphs} Corollary~\ref{co:kcolorable} implies that the complete graph $K_n$ admits a \drawing on a grid of size $O(n^2) \times O(n^2)$. We conclude this section by strengthening this result. Namely, the next theorem shows that the same area bound can be obtained by \drawings that are also convex. Here, a \emph{convex drawing} is a grid drawing where the vertices of the graph are placed at the corners of a convex polygon. We remark that, in contrast to Corollary~\ref{co:kcolorable}, the next theorem cannot be obtained by exploiting Theorem~\ref{th:stretch}. This is because of a known (super quadratic) lower bound on the area required to produce a convex grid  drawing of a complete graph, given by Acketa and Zunic~\cite{DBLP:journals/jct/AcketaZ95}. Furthermore, the presented bound cannot be improved asymptotically.

\begin{theorem}\label{thm:convex}
The complete graph $K_n$ with $n$ vertices  admits a convex \drawing on a grid of size $O(n^2) \times O(n^2)$ and this is sharp.
\end{theorem}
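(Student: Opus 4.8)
The plan is to prove the two directions separately: an $O(n^2)\times O(n^2)$ construction and a matching $\Omega(n^4)$ area lower bound. For the upper bound I would start from the $n$ corners of a regular $n$-gon inscribed in a circle of radius $R=c\,n^2$ (for a large constant $c$), centered so that the bounding box is $[-R,R]^2$, and then round each corner to its nearest integer grid point. Since each rounded point differs from the exact corner by less than $1$, the first task is to verify that the three defining properties survive the rounding. The minimum pairwise distance of the exact corners equals the side length $2R\sin(\pi/n)=\Theta(n)$, which stays $\gg 1$ after moving every point by less than $1$, so the open unit-diameter disks remain pairwise disjoint. Convex position is preserved because the exact $n$-gon is robustly convex: the \emph{sagitta} at each corner, i.e.\ the distance from a corner to the chord through its two neighbors, equals $R(1-\cos(2\pi/n))=\Theta(R/n^2)=\Theta(c)$, which for $c$ large enough dominates the displacement caused by rounding, so no triple of consecutive rounded points becomes collinear or reflex.

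The heart of the upper bound is the edge–vertex resolution. I would argue that, over all non-incident (vertex, edge) pairs of the exact regular $n$-gon, the minimum distance is attained by a corner and the chord joining its two neighbors and equals exactly the sagitta $\Theta(c)$, every other non-incident pair being strictly farther apart. Rounding changes any single such distance by only an additive constant: the only pairs whose distance is $\Theta(1)$ rather than $\Theta(R)$ are these ``consecutive-triple'' pairs, for which the relevant edge is not short and the relevant vertex is not far, so the perturbation is $O(1)$; for all remaining pairs the original distance is $\Theta(R)$ and swamps any perturbation. Choosing $c$ so that $\Theta(c)$ exceeds this $O(1)$ change by more than $\tfrac12$, all resolutions remain $\ge\tfrac12$, and the resulting convex \drawing lives on a grid of size $2R+O(1)=O(n^2)$ in each dimension, giving area $O(n^4)$. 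The main obstacle here is making the perturbation estimate \emph{uniform}, i.e.\ checking that no non-incident pair that is ``close'' is simultaneously ``perturbation-sensitive.''

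For the lower bound I would first use only the consecutive-triple constraint. Writing $\ell_0,\dots,\ell_{n-1}$ for the side lengths and $\alpha_i$ for the exterior angle at the $i$-th corner, the requirement that the distance from a corner to the chord of its neighbors be $\ge\tfrac12$ yields $\alpha_i\ge \tfrac14\big(\tfrac{1}{\ell_{i-1}}+\tfrac{1}{\ell_i}\big)$ (bounding that distance by $\min(\ell_{i-1},\ell_i)\sin\alpha_i$). Summing and using $\sum_i\alpha_i=2\pi$ gives $\sum_i 1/\ell_i=O(1)$, whence by Cauchy–Schwarz the perimeter $\sum_i\ell_i=\Omega(n^2)$; since a convex curve has perimeter at most $\pi$ times its diameter, the diameter of the drawing is $\Omega(n^2)$.

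It remains to turn one large dimension into two, i.e.\ to show that the drawing cannot be thin. Let $u,w$ be a diametral pair, so the $K_n$-edge $uw$ has length $\Omega(n^2)$, and consider the $\ge n/2$ vertices on one side of line $uw$, which form a convex chain from $u$ to $w$. The point I would exploit is that the resolution constraint forbids near-collinearity of \emph{any} sub-collection of vertices: if many vertices of this chain huddled within distance $O(1)$ of a common line, then the $K_n$-edge joining two far-apart vertices of the chain would pass within less than $\tfrac12$ of an intermediate vertex. Making this quantitative, I would show that the chain's extent perpendicular to $uw$ is also $\Omega(n^2)$; combined with $|uw|=\Omega(n^2)$ this forces both the width and the height of the bounding box to be $\Omega(n^2)$, hence area $\Omega(n^4)$. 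I expect this width bound — proving that the edge–vertex resolution excludes \emph{thin} convex drawings, not merely long ones — to be the main obstacle, since the consecutive-triple accounting controls only the perimeter and leaves open a near-degenerate thin configuration that exactly the longer-range constraints must rule out.
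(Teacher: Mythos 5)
Your upper bound is essentially the paper's own construction: place the vertices at the corners of a regular $n$-gon of circumradius $\Theta(n^2)$ (the paper takes $r=2n^2$, which makes the distance from each corner to the chord through its two neighbours at least $16$), round every corner to the nearest grid point, and absorb the $O(1)$ loss caused by rounding in the constant. You are in fact more careful than the paper, which does not explicitly recheck convexity or disk-disjointness after rounding. The first half of your lower bound is also correct, but takes a genuinely different route: the paper fixes an octant containing $n/8$ of the cycle edges, extracts $n/32$ disjoint pairs of consecutive edges whose slopes differ by less than $32/n$, and shows each pair forces horizontal extent greater than $n/64$; you instead convert the same geometric constraint (a vertex versus the chord through its two neighbours) into $\alpha_i\ge\frac14\bigl(\frac{1}{\ell_{i-1}}+\frac{1}{\ell_i}\bigr)$, sum against $\sum_i\alpha_i=2\pi$, and use Cauchy--Schwarz plus ``perimeter $\le\pi\cdot$ diameter''. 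Your accounting avoids the octant case distinction and is arguably cleaner; one small patch is needed where $\alpha_i>2\pi/3$ (there the bound $h_i\le\min(\ell_{i-1},\ell_i)\sin\alpha_i$ may fail, but then $\alpha_i\ge\frac14(\frac{1}{\ell_{i-1}}+\frac{1}{\ell_i})$ holds trivially since $\ell_i\ge 1$). Note that this first half already proves everything the paper itself proves for sharpness: the paper's lower-bound argument ends with a quadratic bound on the \emph{width} only, i.e., one side of the bounding box must be $\Omega(n^2)$, so no $o(n^2)\times o(n^2)$ drawing exists.

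The genuine gap is your second half: the claim that the extent perpendicular to the diameter (hence both box dimensions) must be $\Omega(n^2)$ is \emph{false}, so no quantitative version of your ``no near-collinear subcollection'' step can succeed. Counterexample: take $q_i=(2^{i+3},\,8i)$ for $i=1,\dots,n$. The slope of $(q_i,q_{i+1})$ is $2^{-i}$, strictly decreasing, so the points form a strictly concave $x$-monotone chain and are in convex position. For $j<i<k$, the vertical gap between $q_i$ and the chord $(q_j,q_k)$ is at least its gap to $(q_{i-1},q_{i+1})$, which equals $(2^{-(i-1)}-2^{-i})\cdot\frac{\Delta_{i-1}\Delta_i}{\Delta_{i-1}+\Delta_i}=\frac{8}{3}$ where $\Delta_i=2^{i+3}$; since every chord has slope at most $\frac12$, every perpendicular vertex-edge distance exceeds $2$, and a vertex lying outside a chord's $x$-range is at distance at least $8$ from its nearest endpoint. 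So this is a valid convex disk-link drawing of $K_n$ with integer coordinates, yet its height is $8(n-1)=O(n)$, while its (essentially horizontal) diameter is exponential. The moral is that near-collinearity is perfectly compatible with edge-vertex resolution once the spacings grow geometrically: the edge joining two far-apart chain vertices then stays far \emph{below} the intermediate vertices, so ``many vertices within $O(1)$, or even $O(n)$, of a line'' does not create any violation. Fortunately, this flaw does not affect the theorem as the paper construes it: sharpness there is exactly the one-dimensional $\Omega(n^2)$ bound, which your first half delivers, so you should simply delete the second half. (If one instead insists on reading ``sharp'' as an $\Omega(n^4)$ area bound, be aware that the paper does not prove that statement either, and it certainly cannot be proven via ``both sides are $\Omega(n^2)$''; it would require a different argument trading width against height.)
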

\begin{proof}
Denote by $v_0,v_1,\ldots,v_{n-1}$ the vertices of $K_n$. Let $R_n$ be a regular $n$-gon centered at point $(0,0)$ such that the distance between its center and any of its vertices is $r$, where $r$ is a positive integer that we will define below. For $i=0,1,\ldots,n-1$, we place vertex $v_i$ at the $i$-th corner of $R_n$ and obtain a drawing $\Gamma_n$ of $K_n$, which is not necessarily a grid drawing. For $i=0,1,\ldots,n-1$, denote by $x_i$ the distance between vertex $v_i$ and edge $(v_{i-1},v_{i+1})$, where the indices are taken modulo $n$. It follows that $x_0=x_1=\ldots=x_{n-1}$. Observe that the edge-vertex resolution of $\Gamma_n$ equals to $x_0$. The goal is to specify $r$  such that $x_0$ is at least $16$ (a suitable value greater than one). The next claim shows that if we set $r=2n^2$, this goal is achieved. 

\begin{claim}
If $r=2n^2$, then the distance $x_i$ between vertex $v_i$ and edge $(v_{i-1},v_{i+1})$ is at least $16$, where $n \geq 2$, $i=0,1,\ldots,n-1$ and indices taken modulo $n$.
\end{claim}
\noindent\textit{Proof of claim.} 
By the symmetry of the construction, it suffices to prove that $x_0 \geq 16$. For $i=0,1,\ldots,n-1$, denote by $\phi_i$ the smallest of the two angles between the line segments that connect the center of $R_n$ with the vertices $v_{i-1}$ and $v_{i}$. Since $R_n$ is a regular $n$-gon, it follows that $\phi_0=\phi_1=\ldots=\phi_{n-1}=\frac{2\pi}{n}$. Since the edge $(v_1,v_{n-1})$ is perpendicular to the line segment connecting the center $(0,0)$ of $R_n$ with vertex $v_0$, it follows that $\cos{(\frac{2\pi}{n})} = \frac{r-x_0}{r}$. Hence, the goal $x_0 \geq 16$ that we set above is equivalent to $r(1-\cos{ (\frac{2\pi}{n}) }) \geq 16$. 
Since $r = 2n^2$ and $n \geq 2$, what we have to prove is that $2n^2(1-\cos{\frac{2\pi}{n}}) \geq 16$ for every $n \geq 2$. To see this, let $f: \mathbb{R} \rightarrow \mathbb{R}$ be such that $f(x) = 2x^2 (1 - \cos{(\frac{2 \pi}{x})}) - 16$. Clearly, if $f(x) \geq 0$ for $x \geq 2$, then 
the proof follows. Using elementary properties of trigonometric functions, we can rewrite $f$ as $f(x) = 4 x^2 \sin^2{(\frac{\pi}{x})} - 16$. Since $x \geq 2$, $f(x) \geq 0$ is equivalent to $2 x \sin{(\frac{\pi}{x})} - 4 \geq 0$. Let $h: \mathbb{R} \rightarrow \mathbb{R}$ be such that $h(x) = 2 x \sin{(\frac{\pi}{x})} - 4$. The first derivative of $h$ is $h'(x) = 2 \sin{(\frac{\pi}{x})} - \frac{2\pi}{x} \cos{(\frac{\pi}{x})}$. Hence, $h'(x) \geq 0$ if and only if $\tan{(\frac{\pi}{x})} \geq \frac{\pi}{x}$, which holds for all $x \geq 2$. The fact the first derivative of $h$ is positive implies that $h$ is increasing. Hence, $h(x) \geq h(2)$ for all $x \geq 2$, or equivalently $2 x \sin{(\frac{\pi}{x})} - 4 \geq 0$ for $x \geq 2$. The latter implies that $f(x) \geq 0$ for $x \geq 2$.\qed


We now prove that the drawing $\Gamma_n'$ obtained from $\Gamma_n$ by rounding each vertex in $\Gamma_n$ to its nearest grid point in $\Gamma_n'$ has edge-vertex resolution at least $\frac{1}{2}$, that is, by replacing each vertex with a disk centered at that point we obtain a \drawing. Consider the effect of this rounding operation on the edge-vertex resolution of $\Gamma_n'$. In particular, consider vertex $v_i$ and the edge $(v_{i-1},v_{i+1})$ for some $i=0,1,\ldots,n-1$. The rounding may result in bringing edge $(v_{i-1},v_{i+1})$ one unit closer to $v_i$ in the worst case. Similarly, in the worst case the same effect may be observed by the rounding of the vertices $v_i$ and $v_{i+1}$. Hence, in the worst case the rounding may result in decreasing the edge-vertex resolution of $\Gamma_n$ by two units in $\Gamma_n'$.  This completes the proof of the upper~bound.

For the lower bound, we prove an even stronger statement. Consider any convex \drawing of the $n$-vertex cycle $C_n \subset K_n$. Without loss of generality, we assume that at least $1/8$ of its edges are in the `bottom-right' segment, i.e., their slope is between $0$ and $1$. As the slope increases along the bottom-right part as we move to the right, this means that we will find $n/32$ disjoint pairs of consecutive edges whose slopes differ by less than $32/n$; otherwise there would be $n/32$ pairs of consecutive edges whose slopes differ by at least $32/n$, summing to a total of 1, contradicting that all slopes are between 0 and 1.

Consider such a pair of edges and let the coordinates of their end-vertices be without loss of generality $(0,0)$, $(x_1,y_1)$ and $(x_2,y_2)$. Then the slope condition gives $y_2/x_2 - y_1/x_1 < 32/n$. Since each vertex is represented by a disk of radius $\rho$, we need that the link from $(0,0)$ to $(x_2,y_2)$ is at least $\rho$ above $(x_1,y_1)$ when it passes the line $x=x_1$. This means $\rho=\frac 12\le x_1(y_1+y_2)/(x_1+x_2)-y_1$. From here a simple calculation gives $x_1+x_2 \le 2x_1y_2-2x_2y_1$, but we also have $x_1y_2-x_2y_1<32x_1x_2/n$ from the condition that the slopes were close. Combining the two inequalities before, we get $x_1 + x_2 < 64 x_1 x_2/n$, which can be rewritten as $n/64 < x_1 x_2 / (x_1 + x_2)$, which in turn implies that $n/64 < x_1 + x_2$, since $x_1 x_2 / (x_1 + x_2) < x_1 < x_1 + x_2$. Summing up over all $n/32$ pairs of edges gives a quadratic lower bound for the width of the \drawing.
\end{proof}

\section{Planar Drawings}
\label{sec:planar}
In this section we study crossing-free \drawings of planar graphs. By Corollary~\ref{co:planar} every planar graph admits a planar \drawing on a grid of quartic size; we reduce this upper bound to quadratic, which is tight even for planar grid drawings~\cite{DBLP:conf/stoc/FraysseixPP88}. Moreover, we prove a quadratic lower bound for the area requirement of star graphs, which notably holds also for closed RI drawings. We begin with this last result.

\subsection{Lower bound on the area of star graphs}
In the traditional straight-line drawing model, an $n$-vertex star admits a planar drawing on a grid of size $2 \times (n-2)$, e.g., by placing its center at $(0,0)$, its $i$-th leaf at $(i-1,1)$, where $i=1,\ldots,n-2$, and its $(n-1)$-th leaf at $(1,0)$. We prove a quadratic lower bound for the area of \drawings of stars. 

\newcommand{\nplus}[1]{^{^{\hspace{-8pt}\small #1}}}

\begin{theorem}\label{thm:star}
Any \drawing of the $n$-vertex star requires a grid of size $\Omega(n^2)$.
\end{theorem}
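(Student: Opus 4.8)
The plan is to put the center at the origin $O$ and work with the $m=n-1$ leaves $P_1,\dots,P_m$. These are distinct grid points, pairwise at distance at least $1$ (disjoint disks) and each at distance at least $1$ from $O$; moreover no two leaves lie on a common ray from $O$, since otherwise the edge to the farther one would pass through the center of the nearer disk, violating condition $(iii)$. Writing $r_i=|P_i|$ and sorting the leaves by angle around $O$, the whole argument rests on one elementary consequence of $(iii)$: for two leaves spanning an angle $\alpha$ at $O$, the distance from the nearer leaf to the line of the farther edge equals $\min(r_i,r_j)\sin\alpha$, and this foot lies on the segment whenever $\alpha\le\pi/2$; hence in that case $\min(r_i,r_j)\sin\alpha\ge\tfrac12$.

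First I would prove a ``few close leaves'' estimate: at most $4\pi R_0$ leaves lie within distance $R_0$ of $O$. Indeed, if two angularly consecutive leaves satisfy $\min(r_i,r_{i+1})\le R_0$, then their angular gap is at least $\tfrac{1}{2R_0}$ (immediate when the gap is $\ge\pi/2$, and otherwise from $\alpha\ge\sin\alpha\ge\tfrac{1}{2R_0}$ via the constraint above). Charging to each near leaf the gap to its clockwise neighbour yields that many distinct gaps of size $\ge\tfrac{1}{2R_0}$ out of a total of $2\pi$, so their number $k$ satisfies $k\le 4\pi R_0$. Consequently the $j$-th closest leaf obeys $r_{(j)}\ge j/(4\pi)$, and therefore $\sum_i r_i\ge\sum_j j/(4\pi)=\Omega(m^2)$.

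The crucial second half turns this linear-in-distance information into an \emph{area} bound; a purely angular argument would only force the diameter, i.e.\ $\max(W,H)$, to be $\Omega(n)$. For each angular gap $\alpha_i$ between consecutive leaves, the triangle $OP_iP_{i+1}$ lies inside the convex hull of the drawing, hence inside the $W\times H$ bounding box, and these triangles occupy pairwise disjoint angular sectors. Restricting to \emph{acute} gaps ($\alpha_i<\pi/2$), each such triangle has area $\tfrac12 r_ir_{i+1}\sin\alpha_i\ge\tfrac14\max(r_i,r_{i+1})$ by the disk constraint, so $WH\ge\tfrac14\sum_{\mathrm{acute}}\max(r_i,r_{i+1})$. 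Since the gaps sum to $2\pi$, at most four of them can be non-acute; discarding these few pairs costs only $O(r_{\max})$, giving $WH\ge\tfrac14\sum_i r_i-r_{\max}$.

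Finally I would combine the two halves. Using $\sum_i r_i=\Omega(m^2)$ and $r_{\max}\le\sqrt{W^2+H^2}\le W+H$, I obtain $WH+(W+H)=\Omega(n^2)$; since $m\ge3$ leaves at distinct angles cannot all be collinear we have $W,H\ge1$, whence $W+H\le WH+1$ and therefore $WH=\Omega(n^2)$, as required. The main obstacle is precisely this last, two-dimensional step: the disk condition naturally produces only an angular-resolution bound, which on its own certifies just one large side of the grid, and the role of the triangle-area accounting (together with the fact that only $O(1)$ angular gaps can be wide) is to guarantee that the $\Omega(m^2)$ total distance genuinely spreads across \emph{both} dimensions of the bounding box rather than stretching a single one.
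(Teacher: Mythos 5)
Your proof is correct, and it rests on the same geometric core as the paper's: order the leaves angularly around the hub, consider the fan of triangles spanned by the center and angularly consecutive leaves, use the disjointness of the corresponding sectors to bound the drawing area from below by the total triangle area, and extract everything from the resolution inequality $\min(r_i,r_j)\sin\alpha\ge\tfrac12$ (the paper's inequalities~(\ref{eq:area}) and~(\ref{eq:sin})). Where the two arguments genuinely diverge is in how the quadratic bound is then produced. The paper bounds the harmonic sum $\sum_i 1/A(T_i)<16\pi$ and applies the arithmetic--harmonic mean inequality, getting $\sum_i A(T_i)\ge (n-1)^2/(16\pi)$ in one step. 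You instead prove an intermediate packing lemma---at most $4\pi R$ leaves lie within distance $R$ of the center, hence the $j$-th closest leaf satisfies $r_{(j)}\ge j/(4\pi)$ and $\sum_i r_i=\Omega(n^2)$---and then convert total distance into area via $A(T_i)\ge\tfrac14\max(r_i,r_{i+1})$ for acute gaps. The paper's route is shorter and gives a clean explicit constant; your route yields a reusable structural fact (linear growth of the sorted distances, which already certifies diameter $\Omega(n)$) and is noticeably more careful on a point the paper treats only implicitly: when an angular gap lies in $(\pi/2,\pi)$, the foot of the perpendicular may fall outside the edge segment, and neither $A(T_i)\ge |\overline{p_cp_i}|/4$ nor $\sin\theta_i\ge 1/(2|\overline{p_cp_i}|)$ follows from the resolution condition alone (the paper only eliminates gaps of size at least $\pi$ by adding dummy leaves). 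Your restriction to acute gaps---of which all but at most four are kept, the rest discarded at cost $O(r_{\max})=O(W+H)$---circumvents exactly this degeneracy. One small slip worth fixing: three leaves at distinct angles from the center \emph{can} be collinear (e.g., $(1,1)$, $(2,1)$, $(3,1)$); the correct justification for $W,H\ge 1$ is that the center together with at least three leaves cannot lie on a single line, since otherwise two leaves would share a ray from the center, which you have already excluded.
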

\begin{proof}
Let $\Gamma$ be any \drawing of the $n$-vertex star $S_n$, and denote by $c$ the vertex of $S_n$ with degree $n-1$. We will use the following notation; refer to Figure~\ref{fi:star} for an illustration. The leaves of $S_n$ (i.e., its degree-1 vertices) are denoted by $v_0, v_1, \dots, v_{n^*}$, with $n^*=n-2$, such that $(c,v_0), (c,v_1), \dots, (c,v_{n^*})$ is a counter-clockwise order of the edges around $c$. For every leaf $v_i$ of $S_n$, the center of the disk representing $v_i$ in $\Gamma$ is denoted by $p_i$. Moreover, up to a translation of $\Gamma$, we can assume without loss of generality that the center $p_c$ of the disk representing $c$ is at point $(0,0)$, namely $p_c=(0,0)$. For $0 \leq i \leq n^*$, we denote by $\theta_i$ the angle formed by the line segments $\overline{p_cp_i}$ and $\overline{p_cp_{i+1}}$ (indices taken modulo $n^*+1$). We assume that each such angle is strictly smaller than $\pi$, as otherwise we can draw one or two extra leaves in $\Gamma$ to guarantee this property without affecting the asymptotic area requirement with respect to $n$. Namely, if there exists an angle $\theta_i = \pi$, there is one grid point at distance one from $c$ such that drawing a new leaf on this point splits $\theta_i$ into two smaller angles; if $\theta_i > \pi$, there are two grid points at distance one from $c$ such that drawing a new leaf on each of them splits $\theta_i$ into three angles smaller than $\pi$.
Clearly $\sum_{i=0}^{n^*}\theta_i=2 \pi$. Finally, we denote by $T_i$ the triangle having $p_c$, $p_i$, and $p_{i+1}$ as corners, 
and by $A(T_i)$ the area of triangle $T_i$. Since $\rho=\frac{1}{2}$, the height of $T_i$ is at least $\frac{1}{2}$ (see, for example, triangle $T_1$ in Figure~\ref{fi:star}),  it follows that:
\begin{equation}\label{eq:area}
A(T_i) \quad \ge \quad \frac{|\overline{p_cp_i}|}{4}.
\end{equation}
\noindent On the other hand, using elementary trigonometry, one can see that 
\begin{equation}\label{eq:sin}
\sin \theta_i \quad \ge \quad \frac{1}{2|\overline{p_cp_i}|}.
\end{equation}
%
\begin{figure}[t]
\centering
\includegraphics[page=3]{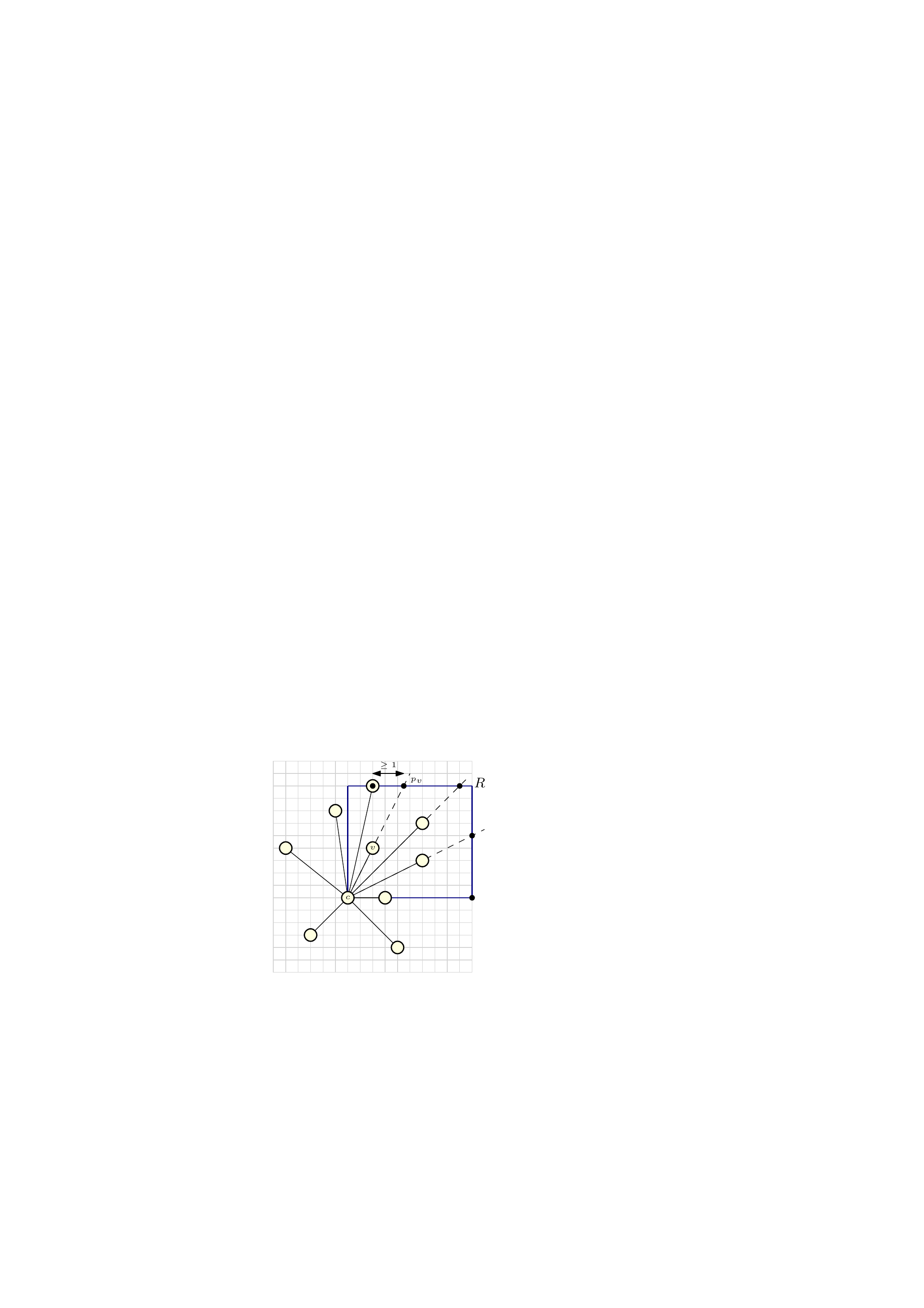}
\caption{Illustration for the proof of Theorem~\ref{thm:star}.\label{fi:star}}
\end{figure}
\noindent Since $\sin{x}<x$ when $x \in (0,\pi)$, it holds
\begin{equation}\label{eq:upper-sum}
\sum_{i=0}^{n^*} \sin{\theta_i}
\quad < \quad 
\sum_{i=0}^{n^*} \theta_i \quad = \quad 2\pi.
\end{equation}
%
%
\noindent We are now ready to put everything together:
\begin{equation}\label{eq:upper-const}
\sum_{i=0}^{n^*} \frac{1}{A(T_i)} 
\quad \leq\nplus{(\ref{eq:area})} \quad
\sum_{i=0}^{n^*} \frac{4}{|\overline{p_cp_i}|} 
\quad \leq\nplus{(\ref{eq:sin})} \quad \sum_{i=0}^{n^*} 8 \sin{\theta_i} 
\quad <\nplus{(\ref{eq:upper-sum})} \quad
 16\pi.
\end{equation}
\noindent Finally, by the arithmetic-harmonic mean inequality and since $n^*=n-2$ we have that:
\begin{equation*}\label{eq:final-area}
\sum_{i=0}^{n^*} \frac{A(T_i)}{n^*+1} \quad \geq \quad \frac{n^*+1}{\sum_{i=0}^{n^*}\frac{1}{A(T_i)}} \quad \geq\nplus{(\ref{eq:upper-const})} \quad \frac{n^*+1}{16\pi} \quad \Rightarrow \quad \sum_{i=0}^{n^*} A(T_i) \quad \geq \quad \frac{(n-1)^2}{16\pi}.
\end{equation*}
Since the area of $\Gamma$ is at least the sum of the areas of all triangles $T_i$, the statement~follows.
\end{proof}

The next corollary follows from Theorem~\ref{thm:star}, as any closed RI drawing can be interpreted as a \drawing. Note that previously, the most restricted family of planar graphs for which a quadratic area lower bound was known was the irreducible triangulations~\cite{DBLP:journals/comgeo/SadasivamZ11}.

\begin{corollary}\label{co:star}
Any closed RI drawing of the $n$-vertex star requires a grid of size $\Omega(n^2)$.
\end{corollary}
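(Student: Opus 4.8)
The plan is to derive Corollary~\ref{co:star} as a direct consequence of Theorem~\ref{thm:star}, exploiting the observation already recorded in the excerpt that any closed RI drawing can be interpreted as a \drawing. The key structural fact is that if a drawing is a closed RI drawing with vertices at integer coordinates, then no vertex lies in the \emph{closed} axis-parallel rectangle spanned by the endpoints of any edge; in particular, no vertex can lie on the edge segment itself nor arbitrarily close to it in the grid sense. I would make this reduction precise: given a closed RI drawing $\Delta$ of the $n$-vertex star $S_n$ on an integer grid, I would show that replacing each vertex by an open disk of radius $\rho=\frac12$ yields a valid \drawing of $S_n$ on essentially the same grid (possibly after a bounded rescaling absorbed into the $\Omega$).

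The main step is to verify the three conditions in the Definition of a \drawing for the rounded/interpreted drawing. Condition $(i)$ (integer coordinates) is immediate since $\Delta$ already has vertices at integer grid points. For conditions $(ii)$ and $(iii)$ I would argue that the closed RI property forces every non-incident vertex to lie strictly outside the closed bounding rectangle of each edge, hence at $L_\infty$-distance at least~$1$ from the segment along at least one axis; a short geometric argument then shows the Euclidean distance from the center of any non-incident disk to the segment is at least $\frac12$, giving edge-vertex resolution $\ge\rho$ and simultaneously ensuring disks do not intersect edges or one another. The cleanest route is to note that two distinct grid points whose separating rectangle is forbidden to contain a third grid point cannot be closer than one unit in the relevant coordinate, so the half-unit disks stay clear.

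With the reduction established, the conclusion is immediate: any closed RI drawing of $S_n$ yields a \drawing of $S_n$ on a grid of the same asymptotic size, and Theorem~\ref{thm:star} guarantees that every \drawing of $S_n$ requires a grid of size $\Omega(n^2)$. Therefore the closed RI drawing also requires $\Omega(n^2)$ area, which is exactly the statement of Corollary~\ref{co:star}.

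The step I expect to be the main obstacle is making the interpretation ``closed RI drawing $\Rightarrow$ \drawing'' fully rigorous with the correct constant, since the definitions differ in their primitives: RI drawings forbid vertices inside a closed rectangle, whereas \drawings demand a metric clearance of $\rho=\frac12$ around edges. The care needed is to confirm that the closed (not open) variant of the RI condition supplies enough separation — in the open case, as the excerpt itself notes, a vertex on the rectangle boundary could still be hit by the edge, so the argument genuinely relies on closedness. Once this distinction is handled and the half-unit clearance is checked, the remainder is a one-line appeal to Theorem~\ref{thm:star}.
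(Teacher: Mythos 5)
Your proposal is correct and follows exactly the paper's route: the paper derives the corollary in one line by observing that any closed RI drawing with vertices at integer coordinates can be interpreted as a \drawing, and then invokes Theorem~\ref{thm:star}. Your elaboration of why the interpretation is valid (integrality plus the closed-rectangle condition forces every non-incident vertex to be at distance at least $1$ from each edge segment, so radius-$\frac{1}{2}$ disks cause no violations) is sound and simply makes explicit what the paper leaves implicit.
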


\subsection{Planar graphs} 
We assume familiarity with basic concepts of planar graph drawing~\cite{DBLP:reference/crc/Vismara13}. 
For completeness, we first recall a standard graph drawing algorithm by de~Fraysseix, Pach and Pollack~\cite{DBLP:journals/combinatorica/FraysseixPP90}, called \emph{shift-method} (for a textbook-level description of the shift method see~\cite[Section 4.2]{DBLP:books/ws/NishizekiR04}).
The algorithm builds upon the well-known \emph{canonical ordering} for maximal planar graphs~\cite{DBLP:journals/combinatorica/FraysseixPP90}, which is defined as follows. Let $G=(V,E)$ be a maximal planar graph and let $\pi =(v_1,\ldots,v_n)$ be a permutation of $V$. Assume that edges $(v_1,v_2)$, $(v_2,v_n)$ and $(v_1,v_n)$ form a face of $G$, which we assume without loss of generality to be its outer face. For $k=1,\ldots,n$, let $G_k$ be the subgraph induced by $\bigcup_{i=1}^k \{v_i\}$ and denote by $C_k$  the outer face of $G_k$. Then, $\pi$ is a \emph{canonical ordering} of $G$ if for each $k=3,\ldots,n$ the following properties hold: %
\begin{enumerate}[P.1]
\item $G_k$ is biconnected,
\item all neighbors of $v_k$ in $G_{k-1}$ are consecutive on $C_{k-1}$, and
\item if $k \neq n$, then $v_k$ has at least one neighbor $v_j$, with $j > k$.
\end{enumerate}
A canonical ordering of a maximal planar graph always exists and can be computed in $O(n)$ time~\cite{DBLP:conf/stoc/FraysseixPP88}.

\begin{figure}[t!]
	\centering
	\begin{subfigure}{.48\textwidth}
	\flushleft
	\includegraphics[page=1,width=\textwidth]{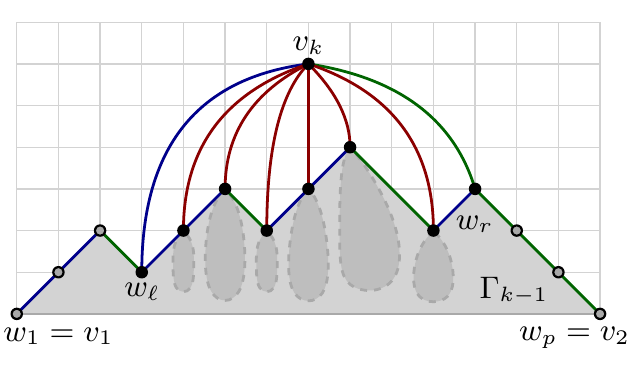}
	\subcaption{Contour condition}
	\label{fig:shift-method-1}
	\end{subfigure}
	\hfil
	\begin{subfigure}{.48\textwidth}
	\flushright
	\includegraphics[page=2,width=\textwidth]{figs/canonical2}
	\subcaption{Placement of $v_k$ in $\Gamma_{k-1}$}
	\label{fig:shift-method-2}
	\end{subfigure}
	\caption{%
	Illustration of the shift-method by de~Fraysseix, Pach and Pollack~\cite{DBLP:journals/combinatorica/FraysseixPP90}.}
	\label{fig:shift-method}
\end{figure}

The \emph{shift-method}~\cite{DBLP:journals/combinatorica/FraysseixPP90} is an incremental algorithm, which constructs a planar drawing $\Gamma$ of a maximal planar graph $G=(V,E)$; in the following, we refer to the linear-time variant by Chrobak and Payne~\cite{DBLP:journals/ipl/ChrobakP95}. Drawing $\Gamma$ has integer grid coordinates and fits in a grid of size $(2n-4)\times (n-2)$. More precisely, based on a canonical ordering $\pi$ of $G$, drawing $\Gamma$ is constructed as follows. Initially, vertices $v_1$, $v_2$ and $v_3$ are placed at points $(0,0)$, $(2,0)$ and $(1,1)$, respectively. For $k=4,\ldots,n$, assume that a planar grid drawing $\Gamma_{k-1}$ of $G_{k-1}$ has been constructed in which edges of $C_{k-1}$ are drawn as  straight-line segments with slopes $\pm 1$, except for the edge $(v_1,v_2)$, which is drawn as a horizontal line segment (\emph{contour condition}; see Figure~\ref{fig:shift-method-1}). Also, for $i=1,\ldots,k-1$ vertex $v_i$ has been associated with a so-called \emph{shift-set} $S(v_i)$. For $v_1$, $v_2$ and $v_3$, it holds that $S(v_1)=\{v_1\}$, $S(v_2)=\{v_2\}$ and  $S(v_3)=\{v_3\}$. Let $(w_1,\ldots,w_p)$ be the vertices of $C_{k-1}$ from left to right in $\Gamma_{k-1}$, where $w_1=v_1$ and $w_p=v_2$. Let also $(w_\ell,\ldots,w_r)$, with $1 \leq \ell < r \leq p$ be the neighbors of $v_k$ from left to right along $C_{k-1}$ in $\Gamma_{k-1}$. To avoid edge-overlaps, the algorithm first translates each vertex in $\bigcup_{i=\ell+1}^{r-1} S(w_i)$ one unit to the right and each vertex in $\bigcup_{i=r}^{p} S(w_i)$ two units to the right; see Figure~\ref{fig:shift-method-2}. Then, the algorithm places vertex $v_k$ at the intersection of the line of slope $+1$ through $w_\ell$ with the line of slope $-1$ through $w_r$ (which is a grid point, since by the contour condition the Manhattan distance between $w_\ell$ and $w_r$ is even) and sets $S(v_k)=\{v_k\} \cup \bigcup_{i=\ell+1}^{r-1}S(w_i)$.

While constructing drawing $\Gamma$, it is also possible to compute a $3$-coloring of the edges of $G$, which is known as \emph{Schnyder realizer} in the literature~\cite{felsner,DBLP:conf/soda/Schnyder90}. In particular, color $(v_1,v_3)$ blue, $(v_2,v_3)$ green and when a vertex $v_k$ with $k=4,\ldots,n$ is placed, color edge $(w_\ell,v_k)$ blue, edge $(v_k,w_r)$ green and the remaining edges incident to $v_k$ in $G_k$ red, that is, $(w_i,v_k)$ with $i=\ell+1,\ldots,r-1$. It follows that all edges that appear in the contour of $\Gamma_k$ are either blue or green, which further implies that all faces of $\Gamma_k$ (and thus of $\Gamma$) are either bichromatic or trichromatic. Since vertices in the same shift-set are always translated by the same amount, the red edges are \emph{rigid}, i.e., neither the slope nor the length of a red edge incident to $v_k$ in $G_k$ can change due to a shift required by the placement of a vertex $v_h$ with $k < h \leq n$. Consider now an edge $e$ in $\Gamma$ and let $\phi(e)$ be the slope of $e$. The construction ensures that if $e$ is blue, then $0 < \phi(e) \leq \pi/4$; if $e$ is green, then $3\pi/4 \leq \phi(e) < \pi$; if $e$ is red, then $\pi/4 < \phi(e) < 3\pi/4$.

\medskip

\begin{figure}[t!]
	\centering
	\includegraphics[page=3,scale=0.9]{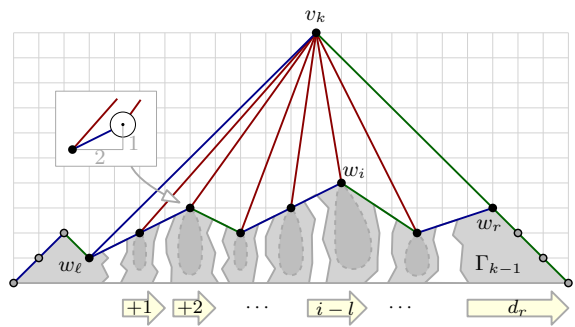}	\caption{%
	Illustration of the placement of $v_k$ in $\Gamma_{k-1}$ in the modification of the shift-method.}	
	\label{fig:shift-method-3}
\end{figure}

Chrobak, Goodrich and Tamassia~\cite{DBLP:conf/compgeom/ChrobakGT96}, in order to compute quadratic-area grid drawings of planar graphs with constant edge-vertex resolution, suggested that ``at the time when a new vertex is installed we shift all covered vertices to the right, ensuring that they are far from nonincident edges''. The main challenge is to determine which vertices to shift and by how much, while keeping both the area small and the edge-vertex resolution constant. Here we provide a linear-time algorithm that addresses this challenge. We start by placing $v_1$, $v_2$ and $v_3$ as in the original shift-method. For placing $v_k$, with $k=4,\ldots,n$, our algorithm shifts the vertices of $\Gamma_{k-1}$ in three ``shifting phases''. First, each vertex in $S(w_i)$ with $i=\ell+1,\ldots,r-1$ is shifted by $i-\ell$ units to the right (instead of a single unit, as in the original shift-method). In the second phase, each vertex in $S(w_r)$ is shifted by $d_r$ units to the right, where $d_r$ is  either $r-\ell$ or $r-\ell+1$ so to guarantee that the Manhattan distance between $w_\ell$ and $w_r$ is even. In the final phase, each vertex in $\bigcup_{i=r+1}^{p} S(w_i)$ is moved by $d_r$ units to the right\footnote{Note that although the second and the third shifting phases shift the relevant vertices by the same amount $d_r$, we distinguish the two phases for clarity of presentation.}; see Figure~\ref{fig:shift-method-3}. After all three shifting phases have been executed, we have the final placement for the vertices of $G_{k-1}$ in $\Gamma_k$. We complete the construction of $\Gamma_k$ by placing vertex $v_k$ at the intersection of the line of slope $+1$ through $w_\ell$ with the line of slope $-1$ through $w_r$, as in the original shift-method. Hence, the contour condition  is maintained, assuming that the coordinates of $v_k$ are integer (a property which is formally proven in the following).

Observe that the first shifting phase implies that the horizontal distance between any two consecutive vertices $w_i$ and $w_{i+1}$ in $C_{k-1}$ with $i \in \{\ell, \ldots, r-2 \}$ gets increased by one unit in $\Gamma_k$, while in the original shift-method this would only be the case for $w_\ell$ and $w_{\ell+1}$. In the second shifting phase, the choice of $d_r$ guarantees that if $v_k$ is placed at the intersection of the line of slope $+1$ through $w_\ell$ with the line of slope $-1$ through $w_r$, then its position coincides with a grid point. This is due to the fact that an even Manhattan distance between $w_\ell$ and $w_r$ implies that the two aforementioned lines intersect at a grid point~\cite{DBLP:journals/combinatorica/FraysseixPP90}. The choice of $d_r$ further implies that the horizontal distance between $w_{r-1}$ and $w_r$ gets increased by either one or two units in $\Gamma_k$, while in the original shift-method the corresponding increment is always one unit. Finally, notice that the third translation phase does not affect the horizontal distances of the involved vertices that are on $C_{k-1}$, as in the original shift-method.

Since the contour condition is maintained in the course of the construction, the planarity of $\Gamma_k$ is implied as in the original shift-method. To complete the proof of correctness of our algorithm, we first prove in the following lemma that $\Gamma_k$ is a \drawing of $G_k$.

\begin{lemma}\label{lem:properties}
Let $\Gamma_{k-1}$ be a \drawing of $G_{k-1}$ computed by our algorithm. 
Then the following two properties hold: (i)~the drawing produced by applying the three shifting phases on the vertices of $\Gamma_{k-1}$ has edge-vertex resolution at least $1/2$, and (ii)~the newly introduced edges of $\Gamma_k$ incident to $v_k$ leave the edge-vertex resolution of $\Gamma_k$ at least $1/2$.
\end{lemma}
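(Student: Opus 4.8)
The plan is to prove both parts by carefully analyzing the geometric effect of the three shifting phases, exploiting the Schnyder coloring that constrains edge slopes to the three cones: blue edges with slope in $(0,\pi/4]$, green edges with slope in $[3\pi/4,\pi)$, and red edges with slope in $(\pi/4,3\pi/4)$. The crucial structural fact I would lean on is that the algorithm guarantees, after the first shifting phase, that the horizontal gap between any two consecutive contour vertices $w_i,w_{i+1}$ (for $\ell \le i \le r-2$) increases by one unit, and the choice of $d_r$ increases the gap between $w_{r-1}$ and $w_r$ by one or two units. Thus every contour edge over which $v_k$ is installed becomes horizontally ``spread out'' before $v_k$ is placed, which is precisely what creates clearance for the new edges.

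For part~(ii), the new edges $(w_\ell,v_k)$ and $(w_r,v_k)$ are blue and green respectively and are drawn with slopes $+1$ and $-1$; the remaining new edges $(w_i,v_k)$ for $\ell<i<r$ are red. I would argue that $v_k$ sits strictly above the entire contour path $w_\ell,\dots,w_r$ by a margin that grows because of the extra horizontal spreading, so that no previously placed disk can come within distance $1/2$ of any newly drawn edge, and symmetrically no new edge can approach a non-incident disk. The key quantitative step is to show that for each red edge $(w_i,v_k)$, the horizontal offset introduced by the first phase (namely $i-\ell$ units applied to $S(w_i)$) guarantees the edge clears the disk of the next contour vertex $w_{i+1}$ by at least $1/2$; this is where I expect to compare the slope of the red edge against the worst-case position of an intervening disk and invoke Lemma~\ref{le:stretch}-style reasoning about spans versus separation.

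For part~(i), the subtlety is that shifting vertices of $\Gamma_{k-1}$ must not destroy the edge-vertex resolution \emph{among the old edges and old disks}. Here I would use the rigidity of red edges: vertices in a common shift-set move together, so red edges never change slope or length, and therefore any edge-disk clearance internal to a rigid block is preserved exactly. The only clearances that can change are those spanning different shift-sets, i.e., involving blue and green edges. Since the shifts are purely horizontal and monotone (left-to-right, by non-decreasing amounts as we move right along the contour in the sense dictated by the three phases), the horizontal span of any blue or green edge can only increase or stay the same while its vertical span is unchanged; by the stretching intuition of Lemma~\ref{le:stretch}, increasing horizontal span while keeping the endpoints' relative vertical positions fixed can only push non-incident disks farther from the edge, never closer. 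I would formalize monotonicity of the shift amounts across the three phases to make this precise.

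The main obstacle, I expect, is part~(i): proving that the shifts never \emph{decrease} any existing edge-vertex distance. One must rule out the case where a disk that was comfortably to one side of a blue/green edge gets dragged across it, or where two shift-sets move by different amounts in a way that tilts an edge unfavorably. The cleanest way around this is to establish a monotonicity lemma stating that, for vertices appearing left-to-right along $C_{k-1}$, the total rightward displacement assigned by the three phases is non-decreasing; combined with the fact that shift-sets are ``laminar'' with respect to the contour (each $S(w_i)$ occupies a contiguous horizontal band lying below $w_i$), this monotonicity forces every blue and green edge to be stretched rather than sheared, preserving resolution. Part~(ii) is more mechanical once the horizontal spreading is quantified, though the red-edge clearance bound will still require a short trigonometric estimate comparing the edge's slope to the location of the nearest intervening contour disk.
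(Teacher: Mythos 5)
Your plan for part~(ii) is essentially the paper's own argument: after the first two shifting phases every pair of consecutive covered contour vertices is spread by at least one extra unit, and every new edge $(w_i,v_k)$ has absolute slope at least $1$, which yields clearance at least $\sqrt{2}/2>1/2$. The genuine gap is in part~(i), and it is exactly the step you flag as the main obstacle: the monotonicity lemma you propose is false. It is \emph{not} true that a purely horizontal shift, non-decreasing from left to right and preserving the $x$-order, can only move non-incident disks away from an edge whose horizontal span it stretches. Counterexample: take $a=(0,0)$, $c=(1,1)$, $b=(10,2)$, and shift $a$ by $0$, $c$ by $5$, $b$ by $5$ (non-decreasing along the contour order, vertical coordinates untouched). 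The distance from $c$ to the segment $\overline{ab}$ drops from $4/\sqrt{26}\approx 0.78$ to $3/\sqrt{229}\approx 0.20$: the intermediate vertex rides along with the right block and is sheared across toward the edge. So monotone horizontal spreading can decrease edge--vertex distances arbitrarily badly, and no lemma of this form can carry part~(i). A second, smaller oversight: it is not true that the only clearances that can change are those involving blue and green edges. A red edge is rigid (its endpoints share a shift-set), but a vertex in a \emph{different} shift-set still moves relative to that red edge; this configuration (the paper's Case~A.1, a face with two blue edges and one red edge) must be analyzed and your framework silently excludes it.

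The paper's proof does not attempt to show that distances never decrease (they can, and do). Instead it reduces part~(i) to the triangular faces of $\Gamma_{k-1}$: a face that is not stretched is unaffected, so only stretched faces matter, and for those it runs a case analysis on the Schnyder coloring of the face --- bichromatic (necessarily two blue or two green edges, since red edges are rigid) versus trichromatic. In each case it re-derives the $1/2$ bound directly in the \emph{shifted} drawing from invariants that the shifting preserves: red edges keep their slope and length; blue edges keep slope in $(0,\pi/4]$ and green edges in $[3\pi/4,\pi)$; coordinates remain integral, so relative vertical order and vertical gaps of at least one unit between the relevant vertices are preserved. Typical conclusions are then of the form ``the lowest point of the disk centered at the top vertex lies strictly above the highest point of the opposite edge,'' or ``the angle opposite the rigid red edge only opens under the shift, so that distance only grows.'' To repair your proof you would have to abandon the global monotonicity claim and reconstruct precisely this per-face, per-color analysis; the face reduction plus the color/slope/grid invariants are the missing ideas that make part~(i) go through.
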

\begin{proof}
To prove~(i), consider any triangular face $f$ in $\Gamma_{k-1}$, delimited by $u$, $v$ and $w$. If face $f$ is not stretched in some of the three shifting phases, then clearly there is no edge-disk intersection in the drawing of $f$ in $\Gamma_{k}$.  Hence, we may assume that $f$ has been stretched in at least one of the three shifting phases. As already mentioned, $f$ is either bichromatic or trichromatic in the Schnyder realizer. We consider these two cases separately. 

\begin{figure}[t!]
	\centering
	\begin{subfigure}{.24\textwidth}
	\includegraphics[page=1,width=\textwidth]{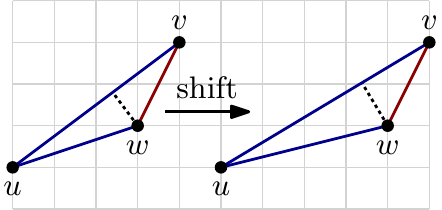}
	\subcaption{Case A.1.}
	\label{fig:shift-bbr}
	\end{subfigure}
	\hfil
	\begin{subfigure}{.25\textwidth}
	\includegraphics[page=2,width=\textwidth]{figs/shift}
	\subcaption{Case A.2. }
	\label{fig:shift-bbg}
	\end{subfigure}
	\hfil
	\begin{subfigure}{.24\textwidth}
	\includegraphics[page=3,width=\textwidth]{figs/shift}
	\subcaption{Case B.1.}
	\label{fig:shift-brg1}
	\end{subfigure}
	\hfil
	\begin{subfigure}{.24\textwidth}
	\includegraphics[page=4,width=\textwidth]{figs/shift}
	\subcaption{Case B.2.}
	\label{fig:shift-brg2}
	\end{subfigure}
	\caption{%
	Stretching of a triangular face $f = \langle u,v,w \rangle$ during the shifting phases.}
	\label{fig:shift-cases}
\end{figure}

\begin{itemize}

\item \textbf{Case A:} $f$ is bichromatic. Since the red edges are rigid, i.e., their length stay unchanged in the course of the algorithm, the color appearing twice in $f$ cannot be red. So we shall assume that $f$ has two blue edges; the case in which it has two green edges is symmetric. Let, without loss of generality,  $(u,v)$ and $(u,w)$ be the  blue edges of $f$, and thus $(v,w)$ is either red or green. 

\begin{itemize}

\item \textbf{Case A.1:} Assume first that $(v,w)$ is red; refer to Figure~\ref{fig:shift-bbr}.  Since  red edges are rigid and we assumed that $f$ is stretched, it follows $v$ and $w$ are in the same shift-set, while $u$ is in a different shift-set from the one of $v$ and $w$. Since $(u,v)$ and $(u,w)$ are blue, we have $\phi(u,v),\phi(u,w) \in (0,\pi/4]$, which implies that $u$ is to the left and below both $v$ and $w$. The fact that $(v,w)$ is red implies that $\phi(v,w) \in (\pi/4,3\pi/4)$. Without loss of generality, we assume that $w$ is below $v$, as in Figure~\ref{fig:shift-bbr}, while $w$ can be either to the left or to the right of $v$. Since $u$ is to the left of $v$ and $w$, $u$ is shifted by a smaller amount than $v$ and $w$. Since the angle $\angle (uvw)$ is increased by the shift while the length of $(v,w)$ remains unchanged, the distance of $w$ to $(u,v)$ is increased after the shifting. Similarly, since $v$ is at least one unit above $w$ and since $w$ is at least one unit above $u$, the distance between $v$ and $(u,w)$ cannot be less than $1$. Thus, the edge-vertex resolution is at least $1/2$, as desired. 

\item \textbf{Case A.2:}  Assume now that $(v,w)$ is green; refer to Figure~\ref{fig:shift-bbg}. Assume again without loss of generality $v$ is above $w$, and recall that $u$ is below both $v$ and $w$. Moreover, $v$ appears between $u$ and $w$ in the horizontal direction. Consider now vertex $v$. Similarly as before, the lowest point of a disk with radius $1/2$ centered at $v$ has $y$-coordinate greater than the $y$-coordinate of $w$, and thus it cannot intersect edge $(u,w)$. Similar arguments can be made about vertices $u$ and $w$ and their opposite edges, respectively, which completes the case in which $f$ is bichromatic.

\end{itemize}

\item \textbf{Case B:}  $f$ is trichromatic. Without loss of generality let $v$ be the topmost vertex of $f$ in $\Gamma_{k-1}$ and let $u$ and $w$ appear in this order in a counterclockwise traversal of $f$ starting from $v$. This implies that $(u,v)$ is either blue or red, as if it was green then either $(u,w)$ or $(v,w)$ had to be also green, which is not possible since $f$ is trichromatic. We distinguish the two subcases below.

\begin{itemize}

\item \textbf{Case B.1:} Assume first $(u,v)$ is blue; refer to Figure~\ref{fig:shift-brg1}. Consider now vertex $w$. Since $(u,w)$ is green and $u$ is to the left and above $w$, the highest point of a disk with radius $1/2$ centered at $w$ has $y$-coordinate smaller than the $y$-coordinate of $u$, and thus it cannot intersect edge $(u,v)$. Concerning vertex $v$, note that the topmost endpoint of $(u,w)$ is at least one unit below it, and hence the distance between $v$ and $(u,w)$ is at least $1/2$.  Finally, the shift only increases the distance between vertex $u$ and edge $(v,w)$. 

\item \textbf{Case B.2:} Assume now $(u,v)$ is red; refer to Figure~\ref{fig:shift-brg2}. Consider now vertex $u$. Since $(u,w)$ is blue and $u$ is to the left and below $w$, the highest point of a disk with radius $1/2$ centered at $u$ has $y$-coordinate smaller than the $y$-coordinate of $w$, and thus it cannot intersect edge $(w,v)$. Concerning vertex $v$, note that the topmost endpoint of $(u,w)$ is at least one unit below it, and hence the distance between $v$ and $(u,w)$ is at least $1/2$.  Finally, the shift only increases the distance between vertex $w$ and edge $(u,w)$. 

\end{itemize}
\end{itemize}

To prove (ii), recall that the first shifting phase implies that the horizontal distance between any two consecutive vertices $w_i$ and $w_{i+1}$ in $C_{k-1}$ with $i \in \{\ell, \ldots, r-2 \}$ increases by one unit in $\Gamma_k$. Also, the second shifting phase guarantees that the horizontal distance between $w_{r-1}$ and $w_r$ increases by either one or two units in $\Gamma_k$. These, together with the fact that the absolute value of the slope of any edge $(w_i,v_k)$ for $i = \ell,\ldots, r$ is at least one, guarantee that the edge-vertex resolution is at least $\sqrt{2}/2 > 1/2$. 
\end{proof}

We are now ready to prove our main theorem.

\begin{theorem}\label{thm:planar}
Every planar graph $G$ with $n$ vertices admits a planar \drawing on a grid of size $(3n-7) \times \lceil (3n-7)/2 \rceil$.
\end{theorem}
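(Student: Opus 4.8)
The plan is to run the modified shift-method described above and prove by induction on $k$ that each intermediate drawing $\Gamma_k$ is a planar \drawing of $G_k$, and then to bound the dimensions of $\Gamma=\Gamma_n$. First I would reduce to the case that $G$ is a maximal planar graph (a triangulation): for $n\ge 3$ any planar graph can be augmented with additional edges to a triangulation $G'$ on the same vertex set, and deleting the added edges from a \drawing of $G'$ leaves the vertex positions, the integer coordinates and the disk-disjointness untouched, while it can only increase the edge-vertex resolution, since the latter is a minimum over a now-smaller set of edges. Hence it suffices to construct the claimed \drawing for $G'$ using a canonical ordering $v_1,\dots,v_n$.

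For the base case I would check directly that $\Gamma_3$, with $v_1,v_2,v_3$ placed at $(0,0),(2,0),(1,1)$, is a \drawing: the three centers are pairwise at Euclidean distance at least $\sqrt2$ and each edge is at distance at least $1$ from the opposite vertex. For the inductive step, assuming $\Gamma_{k-1}$ is a \drawing, Lemma~\ref{lem:properties} does all of the geometric work: part~(i) shows the three shifting phases preserve edge-vertex resolution at least $\tfrac12$ among the already-placed vertices and edges, and part~(ii) shows the edges newly incident to $v_k$ also keep resolution at least $\tfrac12$. Planarity is inherited because the three phases preserve the contour condition; the coordinates of $v_k$ are integral because the parity choice of $d_r$ makes the Manhattan distance between $w_\ell$ and $w_r$ even; and disk-disjointness is free, since two distinct grid points are at distance at least $1=2\rho$, so the corresponding open disks of radius $\rho$ are disjoint.

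The crux is the width bound, which I would obtain by an amortized count of the rightward shifts. The rightmost vertex is always $v_2=w_p$ (a standard invariant of the shift-method: $v_k$ is placed horizontally between its extreme neighbors $w_\ell$ and $w_r$, both of which lie within the current horizontal span, so $v_k$ never overtakes $v_2$), and at the step inserting $v_k$ the vertex $v_2$ is translated rightward by exactly $d_r\le (r_k-\ell_k)+1$. Starting from $x_{v_2}=2$, this gives $x_{v_2}\le 2+\sum_{k=4}^n\big((r_k-\ell_k)+1\big)$. The key point is that $(r_k-\ell_k)+1$ equals the number of edges joining $v_k$ to its lower-indexed neighbors $w_\ell,\dots,w_r$, i.e., the edges created at step~$k$; since $G'$ is a triangulation with $3n-6$ edges and the initial triangle contributes $3$ of them, $\sum_{k=4}^n\big((r_k-\ell_k)+1\big)=3n-9$. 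Thus the horizontal extent of $\Gamma$ is at most $3n-7$, which is the stated width.

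Finally, for the height I would again invoke the contour condition: the leftmost contour edge leaving $v_1=(0,0)$ has slope $+1$ and the rightmost contour edge entering $v_2$ has slope $-1$, and since $v_1$ and $v_2$ are the extreme vertices the entire drawing lies in the triangular ``tent'' below these two lines, whose apex sits at height equal to half the horizontal extent; hence every $y$-coordinate is at most $\lceil(3n-7)/2\rceil$, giving the stated height. The step I expect to be the main obstacle is not geometric, since Lemma~\ref{lem:properties} has already discharged the delicate argument that enlarging the shifts never pushes a disk within distance $\tfrac12$ of a non-incident edge; rather, it is the bookkeeping of the area analysis. One must verify that the three enlarged shifting phases (by $i-\ell$, then $d_r$, then $d_r$) genuinely preserve both the contour condition and integrality, and then carry out the amortized edge-count while carefully controlling the $\pm1$ parity correction hidden in $d_r$, so that the accumulated shift does not exceed the claimed linear bound.
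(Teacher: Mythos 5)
Your proposal is correct and follows the paper's overall structure---augment $G$ to a triangulation, run the modified shift-method, let Lemma~\ref{lem:properties} carry the geometric burden of the induction, and use the contour condition to reduce the area bound to a width bound---but your width analysis takes a genuinely different route. The paper bounds the width by comparison with the original shift-method: it starts from the classical $2n-4$ bound and charges each \emph{extra} unit of elongation of a contour edge $(w_i,w_{i+1})$, $i=\ell+1,\dots,r-1$, to the red edge $(w_i,v_k)$ of the Schnyder realizer, then invokes the fact that the red subgraph is a tree with $n-3$ edges to get $(2n-4)+(n-3)=3n-7$. You instead observe that $v_1$ never moves while the rightmost vertex $v_2$ moves right by exactly $d_r\le (r_k-\ell_k)+1$ at step $k$, which is precisely the number of edges introduced at that step, so summing via Euler's formula gives $2+(3n-6-3)=3n-7$ directly. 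The two counts are in fact equivalent rearrangements---the paper splits the $r_k-\ell_k+1$ edges of step $k$ into two blue/green edges (reproducing the $2n-4$ baseline) and $r_k-\ell_k-1$ red edges (the overhead)---but yours is more self-contained: it needs neither the Schnyder tree property nor the observation that once-stretched contour edges enter $S(v_k)$ and are never stretched again; it only needs $|E|=3n-6$ and the standard invariant that $v_2$ stays rightmost (which you correctly justify via the placement of $v_k$ between $w_\ell$ and $w_r$). What the paper's accounting buys in exchange is structural insight: it isolates exactly where the overhead over the classical shift-method arises, namely one unit per red edge, rather than absorbing it into a global edge count. Both arguments yield the same bound, and your remaining steps (the reduction to maximal planar graphs, the base case $\Gamma_3$, integrality via the parity of $d_r$, and the height bound from the isosceles right triangle) match the paper's proof.
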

\begin{proof}
After possibly augmenting $G$ with edges to make it maximal, let $\Gamma$ be a planar grid drawing of $G$ computed by our algorithm. By Lemma~\ref{lem:properties}, $\Gamma$ is in fact a \drawing of $G$. Thus, it remains to estimate the area required by $\Gamma$. We make use of an important property of Schnyder realizers, namely, that each monochromatic subgraph of $G\setminus\{(v_1,v_2),(v_1,v_n),(v_2,v_n)\}$ induces a tree with $n-2$ vertices~\cite{DBLP:conf/soda/Schnyder90}. By the contour condition, $\Gamma$ is contained in an isosceles right triangle. Hence, to determine its area, it is enough to determine its width. 
Our modification of the shift-method elongates some edges, which were not elongated by the original method. In particular, when placing $v_k$, the edges in the path from $w_{\ell+1}$ to $w_{r-1}$ in $C_{k-1}$ are horizontally stretched by exactly one unit. Since after the placement of $v_k$ these edges connect vertices in shift-set $S(v_k)$, they are not further elongated, that is, they are elongated exactly once in the course of the algorithm. Furthermore, in the original shift-method the edge $(w_{r-1},w_r)$ is elongated by one unit in the horizontal direction during the placement of $v_k$, while in our construction it might be necessary to be elongated by an extra unit.  To estimate the width of $\Gamma$ it is enough to estimate the additional width that is due to our modified shift-method. Towards this, we observe that we can charge the elongation of each edge $(w_i,w_{i+1})$ to the red edge $(w_i,v_k)$ for $i=\ell+1,\dots,r-1$. Hence, the additional width that is due to our modified shift-method is at most $n-3$, since the red subgraph of $G$ is a tree with exactly $n-3$ edges. Given that the width of the drawings produced by the original shift-method is at most $2n-4$, it follows that the width of the drawings of our algorithm is at most $(2n-4)+(n-3)=3n-7$. Since $\Gamma$ is contained in an isosceles right triangle (by the contour condition), its height is  $\lceil (3n-7)/2 \rceil$.
\end{proof}


We conclude with the pseudocode for a linear-time implementation of the algorithm
supporting Theorem~\ref{thm:planar}, see Figure~\ref{fig:algoritm}. The pseudocode is based on the linear-time implementation
of the shift-method by Chrobak and Payne~\cite{DBLP:journals/ipl/ChrobakP95}.

The shift-method can easily be implemented to run in quadratic time by updating the coordinates of all vertices
contained in the shift-sets explicitly at every vertex addition.  
In the original work of de~Fraysseix, Pach and Pollack~\cite{DBLP:journals/combinatorica/FraysseixPP90} a rather involved approach is used to achieve
a runtime of $O(n\log n)$.    
Later, Chrobak and Payne described a linear-time algorithm whose 
key ingredient is to store only relative $x$-coordinates rather than absolute values. 
This method required them to change the definition of the shift-set. 
The proof of Theorem~\ref{thm:planar} uses their definition of shift-set which enables us to 
adapt their approach for our needs.

In the pseudocode we use, besides the already introduced notation, some more variables. 
For a vertex $w_i$ that is part of the contour, $d(w_i)$ denotes the horizontal distance to its predecessor.
Furthermore, the shift-sets are stored as a forest of trees induced by the red edges. For every vertex
we store its link to the parent in the corresponding variable. The relative horizontal offset of a vertex $v$ 
to its parent is denoted by $\Delta(v)$. Note that the \textsc{GetLeftRight($v_k$, $C_{k-1}$)} routine, which returns the leftmost and rightmost neighbors of $v_k$ along $C_{k-1}$, can be easily implemented to run in time linear in the degree of $v_k$ and hence in overall linear time. 

\section{Discussion and Future Research Directions}
\label{sec:conclusions}

We remark that our results are all proved via constructive techniques, and it is possible to show that each of them can be implemented to run in linear time in the number of edges of the graph. The only exception is Theorem~\ref{thm:bandwidth}, which requires a linear ordering of the vertices with minimum bandwidth. Determining the bandwidth of a graph is NP-hard~\cite{DBLP:books/fm/GareyJ79}, even to approximate within a constant factor~\cite{DBLP:journals/jcss/DubeyFU11}; nonetheless there are classes of graphs~for which the problem becomes tractable or it can be approximated (see~\cite{DBLP:journals/jcss/DubeyFU11,DBLP:conf/swat/Feige00} for references). 
Our research suggests interesting research directions, among them:
\begin{enumerate}[P.1]
\item Establishing improved area bounds for specific subclasses of planar graphs;.
\item Designing trade-offs between the edge-vertex resolution of a \drawing and its area requirement.
\item Extending the proposed model by allowing (at most) one bend per edge.
\end{enumerate}

\begin{figure}[p]
\includegraphics[page=2]{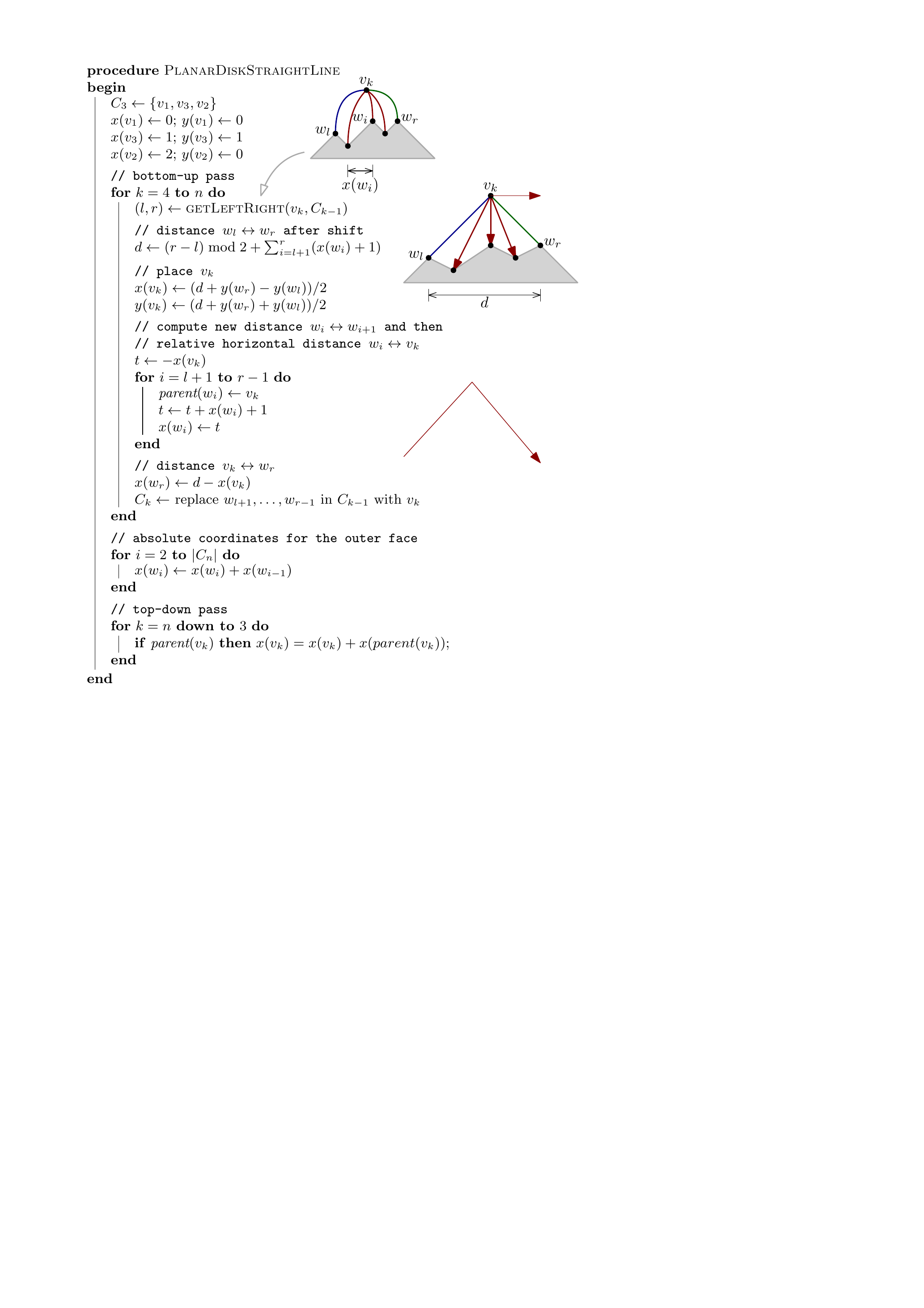}
\caption{A linear-time implementation of the algorithm described in the proof of Theorem~\ref{thm:planar}.}
\label{fig:algoritm}
\end{figure}

\paragraph*{Acknowledgments} We thank the anonymous reviewers for their useful comments and suggestions.

\clearpage 
\bibliographystyle{plainurl}
\bibliography{edgevertex}
\end{document}